\journal{Games and Economic Behavior}
\newtheorem{proposition}{Proposition}
\newtheorem{lemma}{Lemma}
\newtheorem{result}{Result}
\theoremstyle{definition}
\newtheorem{hypothesis}{Hypothesis}
\newtheorem{definition}{Definition}
\newtheorem*{lemma*}{Lemma}
\newtheorem*{proposition*}{Proposition}
\patchcmd{\emailauthor}{(#2)}{}{}{}
\patchcmd{\urlauthor}{(#2)}{}{}{}
\newtheorem*{theorem*}{Theorem}
\theoremstyle{plain} 
\newcommand{\thistheoremname}{}
\newtheorem*{genericthm*}{\thistheoremname}
\newenvironment{namedthm*}[1]
{\renewcommand{\thistheoremname}{#1}%
\begin{genericthm*}}
{\end{genericthm*}}
\definecolor{ForestGreen}{rgb}{.13,.54,.13}
\begin{document}
\begin{frontmatter}

\title{Fair Cake-Cutting in Practice\footnote{We acknowledge excellent research assistance by Andr\'es Azqueta-Galvad\'on, Wai Chung, Mingxian Jin, Alexander Sauer and Carlo Stern, and helpful comments from Uriel Feige,
Jeph Herrin, Herv\'e Moulin and the anonymous referees of this journal and the 20th ACM conference on Economics and Computation, where a preliminary version of this paper has been presented as an extended abstract.
~\\
We thank Sarah Fox and Erin Goldfinch for proof-reading the paper. This project was funded by the EssexLab at and the Centre for European Economic Research, and received ethical approval from the University of Essex. Our data can be downloaded from the \href{http://researchdata.essex.ac.uk/83/}{University of Essex Data Repository}. Replication materials are available at \href{www.josueortega.com}{www.josueortega.com}. We are grateful to COST Action IC1205 for organizing the FairDiv-15 summer school that introduced us to cake-cutting. Erel is partly funded by the Israel Science Foundation grant 712/20.}}
\author[add1]{Maria Kyropoulou}

\author[add2]{Josu\'e Ortega}
		\ead{j.ortega@qub.ac.uk (Corresponding author).}
\author[add4]{Erel Segal-Halevi}

\address[add1]{School of Computer Science and Electronic Engineering, University of Essex, UK.}
\address[add2]{Queen's Management School, Queen's University Belfast, UK.}
\address[add4]{Department of Computer Science, Ariel University, Israel.}	

\date{26 January 2022}

\begin{abstract}
Using two lab experiments, we investigate the real-life performance of envy-free and proportional cake-cutting procedures with respect to fairness and preference manipulation. 
Although the observed subjects' strategic behavior eliminates the fairness guarantees of envy-free procedures, we nonetheless find evidence that suggests that envy-free procedures are fairer than their proportional counterparts. 

Our results support the practical use of the celebrated Selfridge-Conway procedure, and more generally, of envy-free cake-cutting mechanisms.
We also find that subjects learn their opponents' preferences after repeated interaction and use this knowledge to improve their allocated share of the cake. Learning increases strategic behavior, but also reduces envy.
\end{abstract}

\begin{keyword}
cake-cutting \sep Selfridge-Conway \sep cut-and-choose \sep envy \sep fairness \sep preference manipulation \sep experimentation and learning.\\
{\it JEL Codes:} C71, C91, D63.
\end{keyword}
\end{frontmatter}

\newpage
\setcounter{footnote}{0}
\section{Introduction}
\label{sec:introduction}
The problem of how to fairly divide a heterogeneous good among agents who value different parts of it distinctly has been thoroughly studied in many areas of science over the last seventy years. The heterogeneous good is often referred to as the cake, and thus this problem is known as cake-cutting (see \citealp{brams1996}, \citealp{robertson1998} and \citealp{moulin2004} for textbook references). 
Although fundamental breakthroughs have been achieved on the construction of fair cake-cutting procedures, the question of how these procedures fare when applied to real people has not yet been tackled. 
This paper reports the results of two laboratory experiments that provide insights on this question.

Let us start by clarifying what we mean by fair. Although several notions of fairness have been proposed, two important ones stand out for their intuitive formulation. The first one is {\it proportionality}, which requires that every agent obtains at least what she considers to be $1/n$ of the cake when dividing a cake among $n$ agents \citep{steinhaus1948}. The second one is {\it envy-freeness}, which demands that no agent prefers the share allocated to any other agent over hers \citep{gamow1958, foley1967}. Any division that is envy-free is also proportional (if the entire cake is allocated) but the converse is not true, and thus envy-freeness is a stronger property than proportionality.\footnote{Envy-freeness and proportionality are equivalent in the two-agent case.} Proportionality and envy-freeness are often considered ``{\it the two most important tests of equity}'' \cite[p. 166]{moulin2014}.

The literature has developed procedures that produce envy-free cake divisions when all agents report their preferences over the cake pieces non-strategically; we will refer to these as envy-free procedures. 
However, if agents strategically misrepresent their preferences, an allocation with envy can be obtained as a Nash equilibrium outcome of the game associated to envy-free procedures \citep{branzei2016}. In fact, as we show in Proposition \ref{lem:envy-when-strategizing}, envy can rationally emerge in envy-free procedures even when only one agent behaves strategically. Therefore, a key question is whether envy-free cake-cutting procedures are manipulated in practice, and whether such manipulations, if they exist, lead to envy. This is the first question that we tackle in this paper. If envy-free procedures generate envy because of strategic behavior, there would be little support for their real-life implementation, in particular because envy-free mechanisms are particularly involved: the Selfridge-Conway procedure for three agents \citep{brams1996}, studied in the present paper, is a case in point.

A second related question that we tackle is whether agents can successfully learn their opponents' preferences through repeated interaction. This question relates to the previous one in that an agent needs some information about their opponents' preferences to successfully manipulate a cake-cutting procedure. An agent can acquire this valuable information through experimentation, i.e. varying her strategies over time and observing her opponents' best responses to them. If agents do not learn through experimentation, there is little concern about the manipulation of cake-cutting procedures in environments in which agents' preferences are privately known, and thus no concern about the emergence of envy in otherwise envy-free procedures.

We tackle these two questions by means of two lab experiments. We study:
\begin{enumerate}
\item the fairness (i.e. envy-freeness) of envy-free and proportional cake-cutting procedures,

\item the extent to which agents manipulate those procedures, and

\item whether agents learn their opponents' preferences and use that information to their advantage.
\end{enumerate}

We consider the most popular cake-cutting procedures and compare their theoretical properties against their real performance in the lab. The procedures we consider are:
\begin{itemize}
\item For 2 agents: symmetric and asymmetric cut-and-choose, and (a discrete adaptation of) Dubins-Spanier moving knife;
\item For 3 agents: Knaster-Banach last diminisher, (a discrete adaptation of) Dubins-Spanier moving knife, and Selfridge-Conway;
\item For 4 agents: Knaster-Banach last diminisher, (a discrete adaptation of) Dubins-Spanier moving knife, and Even-Paz.
\end{itemize}

These cake-cutting procedures, described in detail in the next section, are well-known in the literature because they all achieve proportional allocations. Furthermore, the asymmetric and symmetric cut-and-choose and the Selfridge-Conway procedures are even envy-free.\footnote{We do not include an envy-free procedure for four agents because the only finite ones known to date \citep{Aziz2015Discrete,amanatidis2018improved} require over 100 queries.}

In our two experiments (henceforth EXP1 and EXP2), 133 and 114 subjects divide cakes in several rounds using the aforementioned procedures. The main difference between EXP1 and EXP2 is that in EXP1, subjects divide cakes against automata, who are programmed to act non-strategically, whereas in EXP2 agents divide cakes against real participants, who may act strategically. 

Each cake is divided 7 times in what we call rounds, during which their opponents' preferences remain constant. This gives subjects the opportunity to try to learn their opponents' preferences. The reward structure of the experiment is such that subjects are actually incentivized to learn, as they get paid an amount that depends on how much they value their allocated share, i.e. their payoff, at each round. In addition, during the final two rounds agents are directly told their opponents' preferences, so that we are able to differentiate between manipulations made to learn the opponents' preferences and those made to directly increase the subjects' immediate payoff. Subjects observe which share of the cake they get in each round and the value (in their own eyes) of their opponents' shares.

\subsection{Overview of Results}
We find that all cake-cutting procedures are very frequently manipulated (in up to 85\% of the cases for some mechanisms, and at least in 40\% of the cases for all mechanisms). The only mechanism in which non-strategic behavior is consistently more frequently observed than manipulations is Selfridge--Conway (subsection \ref{sub:results-manipulation}). As a consequence, envy-free procedures generate envy. Envy is generated in 4--7\% of cases when using the asymmetric cut-and-choose procedure in which the subject cuts the cake, in 18--24\% of cases when using the symmetric cut-and-choose procedure in which both subjects cut the cake, and in 16--28\% of cases when using Selfridge-Conway. However, these procedures still generate significantly less envy than their proportional counterparts (subsection \ref{sub:results-envy}).

Overall, the experimental results provide support for the use of the cut-and-choose and Selfridge-Conway procedures, and more generally, of envy-free cake-cutting procedures. These procedures are less manipulated in practice and generate substantially less envy than proportional ones.

We find some evidence of successful learning, in particular in the cut-and-choose procedure, the Knaster-Banach last diminisher, and to some extent in Selfridge-Conway. Surprisingly, we observe that more knowledge does not always yield higher payoffs. This is because agents use that knowledge to manipulate the cake-cutting procedures in the wrong way. In particular, they try to follow simple heuristics that worked in the past, such as cutting the cake {\it a bit more to the right}, which may be harmful in other procedures in which the optimal manipulation was to cut the cake {\it a bit more to the left}. Overall, we observe that knowledge significantly decreases the level of non-strategic behavior and envy (subsection \ref{sub:results-learning}).

Moreover, we find in two-agent procedures that between 70\% to 76\% of the agents do manipulations that are clearly harmful to them, even in the simple cut-and-choose procedure. For example,
they cut the cake at a certain location $x$, see that their partner chooses the right piece, and then, at the next play against the same partner, cut to the left of $x$ --- which is guaranteed to result in a smaller piece for them (subsection \ref{sub:results-quality}).

\paragraph{\textbf{Structure of the article}}
Section \ref{sec:literature} presents an overview of related experiments and case studies. Section \ref{sec:theory} introduces the cake-cutting model and our testable hypotheses. Section \ref{sec:experiment} presents our experimental design. Section \ref{sec:results} discusses our findings. Section \ref{sec:conclusion} concludes.

\section{Related Literature}
\label{sec:literature}

\subsection{Laboratory Experiments}
\label{sub:Previous-Laboratory-Experiments}
All fair division experiments that we know deal with discrete indivisible goods and/or a homogeneous divisible good such as money. This is quite different than our setting, where there is a continuous heterogeneous divisible good. With indivisible goods, the user input usually consists of a ranking of the goods or an assignment of a monetary value to each good. In contrast, cake-cutting has a spatial element --- the participants have to decide where exactly to cut the cake. Since the user interface, user experience and potential manipulations are different, we cannot automatically expect the findings of previous experiments to hold in our setting too. Keeping this in mind, we survey previous lab experiments and compare their findings with ours.

\paragraph{\textbf{Sophisticated versus simple}}
In some experiments, the main research question is \emph{which procedure yields more user satisfaction?} In particular, do users prefer the allocations generated by sophisticated and provably-fair procedures, to the allocations generated by simple and intuitive procedures? 

\citet{Schneider2004Limitations} compare the simple divide-and-choose procedure to the more sophisticated Adjusted-Knaster and Proportional-Knaster procedures, for allocating indivisible goods with monetary compensation.  
They find that, if the participants truthfully adhere to the protocol, then the sophisticated mechanisms perform better than divide-and-choose in terms of efficiency and fairness. However, if the participants are allowed to strategically deviate from the protocol, then their performance declines and becomes comparable to divide-and-choose. 

\citet{DupuisRoy2009Empirical} compare five procedures for indivisible object allocation (Sealed Bid Knaster, Adjusted Winner, Adjusted Knaster, Division by Lottery and Descending Demand) to the allocation with the highest mutual satisfaction scores (which they find using genetic search). They find that the fair division procedures yield allocations that are rather unsatisfactory to humans. They attribute this to two factors which are often ignored by fair division procedures: temporal fluctuation of human preferences, and non-additivity of valuations.

In a different experiment, \citet{DupuisRoy2011Simpler} compare three simple algorithms for allocating indivisible goods (Strict Alternation, Balanced Alternation and Divide-and-Choose) to four provably-fair algorithms (Compensation Procedure, Price Procedure, Adjusted Knaster and Adjusted Winner). They find that, counter-intuitively, the simple algorithms produce significantly fairer allocations.

In contrast, other studies emphasize the advantage of sophisticated fair division procedures. \citet{bassi2006} studied division of homogeneous resources using Crawford's negotiation procedures, and found that his procedures induce even selfish players to act fairly. \citet{gal2017fairest} used the spliddit.org website \citep{goldman2015spliddit}
to study division of rooms and rent, and found that their maximin procedure performs significantly better than a procedure that selects an arbitrary envy-free allocation.

Our findings are in line with the latter studies. Despite the strategic manipulation by humans, the final outcomes of the envy-free procedures (in particular, Selfridge-Conway) are significantly fairer and more satisfactory than those of the non-envy-free procedures. Thus, at least in our setting, the extra-complexity of the procedures pays back in fairness.

\paragraph{\textbf{Strategic manipulation}}
In some experiments, the main goal is to check the strategic behavior of subjects: Do they try to manipulate the protocol? Do they manipulate successfully? And how does the manipulation affect the protocol outcomes?

All previous experiments that we know of found that agents do try to manipulate. However, the effect of this manipulation on the outcome depends on the protocol: in simple auction-based protocols, manipulations lead to highly inefficient outcomes, where no deal was done even though a deal was possible \citep{daniel1998strategic,parco2004enhancing}. Using more structured conflict-resolution procedures (such as Adjusted Winner) did not eliminate manipulation, but it did lead to a much more efficient outcome \citep{Daniel2005Fair,HortalaVallve2010Simple}.

In our experiment, too, we find that subjects try to manipulate the protocol, and the manipulative behavior increases over time. We also find that some procedures are easier to manipulate than others. In particular, Divide-and-Choose and the Knaster-Banach last diminisher procedure are particularly prone to manipulative behavior. 
This might be due to their simplicity --- procedures that are easier to understand are also easier to manipulate.

Strategic behavior was studied extensively in other markets besides fair division, particularly in matching markets \citep{castillo2016truncation} and university-course allocation
\citep{budish2007strategic,budish2012multi}.
A remarkable finding in some of these experiments is that people try to manipulate even when the mechanism is strategyproof, which means that they provably cannot gain by manipulation \citep{parco2004enhancing,artemov2017,hassidim2016,hassidim2017mechanism,rees2017}.

In our experiment this finding is even more pronounced: about 70\% of all subjects tried at least one manipulation that is strictly dominated and obviously results in a smaller payoff for them.

\paragraph{\textbf{Different desiderata}}
In some experiments, the main research question is \emph{what desiderata are more important to users?}
Early experiments checked this question in the simple setting of dividing money (a homogeneous resource). Many experiments check whether, in an inherently unfair game such as the ultimatum game, subjects prefer to accept an unfair offer than to accept nothing \citep{guth1995ultimatum,lopomo2001,guth2003}.

Other experiments check whether, when dividing money among others, people prefer a fair inefficient division to an unfair division that is more efficient
\citep{engelmann2004inequality,fehr2006inequality,herreiner2007distributing}. It was found that such preferences depend on psychological and cultural factors (e.g. economics students choose differently than law students). Later experiments asked similar questions in more complex settings, involving allocation of indivisible objects \citep{herreiner2009,herreiner2010inequality}. These findings are orthogonal to our experiment, in which the fairness desiderata are fixed and the goal is to check which procedure attains them most efficiently.

\subsection{Other experiments}
\paragraph{\textbf{Case studies}}
Besides lab experiments, several fair division procedures were applied to real-life cases. 

\citet{flood1958} studied a case of dividing gift parcels using the Knaster algorithm, and \citet{Pratt1990Fair} applied an auction-based division algorithm to allocate silver heirlooms. They found that, although the algorithm was decentralized and most participants did not fully understand it or the preference information desired, it handled all major considerations well and was regarded as equitable.

Several counter-factual studies checked the feasibility of using the Adjusted Winner (AW) protocol \citep{brams1996} for resolving international disputes, particularly the Camp David Accords \citep{Brams1996Camp}, the Spratly Islands controversy \citep{brams1997fair} and the Israeli-Palestinian conflict \citep{Massoud2000Fair}.

\citet{Tijs2004Cases} describe some case studies of dividing the profits of cooperation between partners, in light of concepts from cooperative game theory, such as the Shapley value.

\citet{kurokawa2015leximin} applied a \emph{randomized leximin mechanism} for allocating public-school classrooms to charter-schools. Unfortunately, the initiator of this algorithm backed away so the mechanism has not been deployed yet, but the partial collaboration emphasized the importance of intuitive and easy-to-understand fairness guarantees.
\citet{oluwasuji2018algorithms} test their heuristic algorithms for \emph{fair load-shedding} on electricity-usage data, which they collected from a USA-based database and adapted to African consumption patterns.

We are not aware of any case studies regarding cake-cutting algorithms. In fact, the only modern application of a cake-cutting procedure that we are aware of is the procedure for allocating areas in the international oceans for mining, which is based on cut-and-choose \citep{young1995equity,Walsh2011Online}:

\begin{quote}

\emph{
``Each application... shall cover a total area... sufficiently large and of sufficient estimated commercial value to allow two mining operations... of equal estimated commercial value. ... The Authority shall designate which part is to be reserved solely for the conduct of activities by the Authority through the Enterprise or in association with developing States... The area designated shall become a reserved area as soon as the plan of work for the non-reserved area is approved and the contract is signed.''
} (United Nations Convention on the Law of the Sea, Annex III, Article 8).

\end{quote}

A possible reason for the rarity of practical use of cake-cutting algorithms may be the lack of data regarding their performance with real people. This is one issue that the present paper aims to improve.

\paragraph{\textbf{Computerized Simulations}}
Computerized simulations of fair division algorithms were used to test properties of such algorithms that are difficult to analyze theoretically. 
\citet{Walsh2011Online} used simulations to compare the welfare properties of online vs. offline  cake-cutting algorithms. 
\citet{Cavallo2012Fairness} 
used simulations to test his mechanism for redistribution of VCG payments.
\citet{dickerson2014computational,aziz2020fair} studied fair allocation of indivisible goods using computerized simulations. They showed that, when the number of goods is sufficiently large (relative to the number of agents), fair allocations are likely to exist. 
Many computerized simulations use the PrefLib library \citep{mattei2013preflib}, which is a collection of real-world preference relations on discrete items.

\subsection{Strategic Fair Division}
There are several theoretical studies regarding the strategic properties of cake-cutting protocols \citep{branzei2013,branzei2016}, and various sophisticated protocols that are truthful under some assumptions on the valuations or agents' behavior: see \cite{nicolo2008}, \cite{mossel2010}, \cite{maya2012}, \cite{chen2013}, \cite{bei2017cake}, and \cite{bei2018truthful} and \cite{ortega2019obvious}.

The repeated-cake-cutting setting has been studied by \cite{delgosha2012information}. They studied ways by which the cutter can exploit her knowledge of the chooser's preferences in order to improve her own welfare. Recently, \cite{tamuz2018non} continued this line of work by suggesting new division protocols that are \emph{non-exploitable}, i.e. a risk-averse cutter cannot improve her welfare using information from previous interactions.

Our work complements these theoretic works in that we study the strategies actually used by human subjects.

\section{Theory}
\label{sec:theory}

We consider a standard setup based on \cite{procaccia2016}. A {\it cake-cutting problem} $([0,1], N, (v_i)_{i \in N})$ is a triplet where:

\begin{itemize}
\item $[0,1]$ is the cake,
\item $N=\{1,\ldots,n\}$ is the set of agents interested in the cake, and
\item $v_i$ is the valuation function of agent $i$, which maps a given subinterval $I \subseteq [0, 1]$ to the value assigned to it by agent $i$, $v_i(I)$.
\end{itemize}

We write $v_i(x,y)$ as a shorthand for $v_i([x,y])$. We assume that $v_i$ satisfies the following standard properties. For every $i \in N$:
\begin{enumerate}
\item For every point $x \in [0,1]$, $v_i(x,x)=0$.
\item For every subinterval $I$, $v_i(I) \geq 0$.
\item For any two disjoint subintervals $I, I'$, $v_i(I) + v_i(I') = v_i(I \cup I')$
\end{enumerate}

We refer to a finite union of disjoint intervals as a {\it piece of cake}. An {\it allocation} $A$ is a partition of $[0,1]$ into $n$ ordered, pairwise-disjoint pieces of cake $A =(A_1, \ldots, A_n)$ such that $A_1 \cup \ldots \cup A_n = [0,1]$. We only consider complete allocations. In a non-strategic framework in which all agents reveal their true valuation function, a {\it procedure} is a function that takes a cake-cutting problem as input and returns an allocation. We normalize the valuation functions so that $v_i(0,1)=1$ for every agent $i$.

\subsection{Division Procedures}
We consider the following procedures to divide a cake among two agents.
\begin{namedthm*}{Asymmetric cut-and-choose}[2ACC]
Agent 1 cuts the cake into two equally-valued pieces, i.e. two pieces $[0,x_1)$ and $[x_1,1]$ such that $v_1(0,x_1) = v_1(x_1,1) = 1/2$. Agent 2 then chooses her preferred piece, and agent 1 receives the remaining piece. Formally, if $v_2(0,x_1) \geq v_2(x_1,1)$, then set $A_2 = [0,x_1), A_1 = [x_1,1]$; otherwise set $A_1 = [0,x_1), A_2 = [x_1,1]$.
\end{namedthm*}

\begin{namedthm*}{Symmetric cut-and-choose}[2SCC] Both agents cut the cake into two equally-valued pieces by choosing $x_i$ such that $v_i(0,x_i)=v_i(x_i,1)=1/2$. Let agent 1 be the one who chooses the lowest cut point $x_1\leq x_2$ without loss of generality. Then, agent 1 receives the piece $A_1=[0,\frac{x_1+x_2}{2})$, and agent 2 receives the piece $A_2=[\frac{x_1+x_2}{2},1]$.
\end{namedthm*}

Both 2ACC and 2SCC have been used and studied since Biblical times (see Genesis 13), yet they are only defined for the division of cake among two agents. We consider three procedures for dividing cake among three or more players. The first of these is the last diminisher procedure suggested by Knaster and Banach.

\begin{namedthm*}{Knaster--Banach last diminisher for $n$ agents}[$n$LD] Given a cake $[y,1]$, agent 1 chooses a cut $x_1$ so that $v_1(y,x_1)=v_1(y,1)/n$. Agent 2 now has the right, but is not obliged, to choose $x_2 < x_1$. Whatever she does, agent 3 has the right, without obligation, to further diminish the already diminished (or not diminished) piece too, and so on up to $n$. The rule obliges the last diminisher (say agent $i$) who chose the cut $x_i$ to take as her allocation $A_i=[y,x_i)$. Agent $i$ is disposed of, and the remaining $n-1$ persons start the same game with the remainder of the cake $[x_i,1]$. When there is only one agent left, she receives the unclaimed piece of cake.
\end{namedthm*}

A similar procedure to $n$LD is the moving-knife mechanism of \cite{dubins1961}, in which agents cut the cake simultaneously rather than sequentially. Here we describe a discrete adaptation of it.
\begin{namedthm*}{Dubins--Spanier for $n$ agents}[$n$DS]
Given a cake $[y,1]$, each agent simultaneously cuts the cake at a point $x_i$ such that $v_i(y, x_i) = v_1(y,1)/n$. The agent $i^*$ who made the leftmost cut exits with the piece $A_{i^*} = [y, x_{i^*}]$. Agent $i^*$ is disposed of, and the remaining $n-1$ persons start the same game with the remainder of the cake $[x_{i^*},1]$. When there is only one agent left, she receives the unclaimed piece of cake.
\end{namedthm*}

An alternative procedure was suggested by \cite{even1984} that improves on $n$LD in that it requires fewer cuts to achieve a proportional allocation.\footnote{The run-time complexity of the Even-Paz procedure is $O(n \log n)$, whereas the one of Knaster--Banach is $O(n^2)$.} The idea of this procedure is to divide the original cake cutting problem into two disjoint ones at each step.
\begin{namedthm*}{Even-Paz for $n$ agents}[$n$EP] For the sake of clarity assume that $n$ is a power of 2. Given a cake $[y,z]$, all agents choose cuts $x_i$ such that $v_i(y,x_i)=v_i(y, z)/2$. We let $x^*$ be the median cut, i.e. the $\left \lfloor{n/2}\right \rfloor$th cut. Then the procedure breaks the cake-cutting problem into two: all agents who choose cuts $x_i \leq x^*$ are to divide the cake $[y,x^*)$, whereas all agents who chose cuts above $x^*$ are to divide the cake $[x^*,z]$.  Each half is divided recursively among the $n/2$ partners assigned to it. When the procedure is called with a singleton set of agents $\{i\}$ and an interval $I$ it assigns $A_i = I$.
\end{namedthm*}

The last three procedures $n$DS, $n$LD and $n$EP can be adapted to divide a cake among any number of agents. Our last procedure is only suitable for dividing cake among 3 agents. It differs from the previous procedures in that it generates allocations that are not contiguous. Furthermore, it requires not one but two cake cuts to be made at the same time.

\begin{namedthm*}{Selfridge-Conway}[3SC]
Agent 1 cuts the cake into three pieces of equal value to her: $I_1, I_2, I_3$; so that $v_1(I_i)=1/3$. Agent 2 divides the piece of highest value to her, say $I_1$ into $I_1'$ and $T=I_1 \setminus I'_1$, so that the value of $I_1'$ is the same of the second most valuable piece, say $I_2$: $v_2(I'_1)=v_2(I_2)$. We separate the original cake into the modified cake $C'=C \setminus T$ and the trimmings $T$. First we allocate $C'$. Let agent 3 choose and take her favorite piece among $I'_1,I_2,I_3$. If she chooses $I'_1$, let agent 2 choose any remaining piece; but if agent 3 chooses $I_2$ or $I_3$, then give $I'_1$ to agent 2 without letting her choose. Agent 1 receives the leftover piece.

Now we assign $T$. Let $i \in \{2,3\}$ be the player who obtained $I'_1$, and $j$ the other one. Agent $j$ splits $T$ into three parts of equal value to her. Now agent $i$, 1, and $j$ choose a piece of $T$ in that specified order.
\end{namedthm*}

\subsection{Fairness Properties}
We consider the following fairness properties of allocations.

\begin{definition}An allocation $A$ is {\it proportional} if each agent gets at least $1/n$ of the cake according to her own evaluation, i.e. if $\forall i\in N: v_i(A_i) \geq 1/n$.
\end{definition}

\begin{definition}An allocation $A$ is {\it envy-free} if no agent prefers another agent's share, i.e.
$\forall i,j\in N: v_i(A_i) \geq v_i(A_j)$.\footnote{These notions should not be confused with {\it procedural envy-freeness} or anonymity, which requires that the procedure treats agents symmetrically \citep{nicolo2008, bhardwaj2020fairness}. }
\end{definition}

In our setup, envy-freeness implies proportionality, while the converse is true only for the case of two agents. A procedure is envy-free or proportional if, for every cake-cutting problem, it produces an allocation that is envy-free or proportional, respectively.
The following lemma summarizes the well-known fairness properties of these procedures; see \citet{robertson1998} for proofs.
\begin{lemma} 2ACC, 2SCC, $n$DS$, n$LD, $n$EP and 3SC are all proportional. 2ACC, 2SCC, and 3SC are envy-free. $n$DS, $n$LD, and $n$EP are not envy-free.
\end{lemma}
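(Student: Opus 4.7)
My plan is to verify the three claims of the lemma separately: proportionality of all six procedures, envy-freeness of the three distinguished ones, and non-envy-freeness of the remaining three.

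For \emph{proportionality}, the two-player procedures admit direct inspection. In 2ACC the cutter obtains exactly $1/2$ by construction, and the chooser selects the piece she values most, so her share is at least $1/2$. In 2SCC, assuming $x_1\le x_2$ without loss of generality, agent~$1$ receives $[0,(x_1+x_2)/2]\supseteq[0,x_1]$, hence value at least $v_1(0,x_1)=1/2$; symmetrically for agent~$2$. For the $n$-agent procedures I would induct on $n$. In $n$LD the last diminisher obtains a piece worth exactly $1/n$ to herself, while every remaining agent values that piece at most $1/n$ (otherwise she would have trimmed further); hence the residual cake is worth at least $(n-1)/n$ to each surviving agent, and the inductive hypothesis yields $\tfrac{1}{n-1}\cdot\tfrac{n-1}{n}=\tfrac{1}{n}$ of the original cake for each of them. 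The argument for $n$DS is identical with the leftmost cutter playing the role of the last diminisher. For $n$EP I would instead induct on the recursion depth $\log_2 n$, using that any agent routed to a given half values it at least $1/2$ of the current cake because her cut lay on the correct side of the median.

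For \emph{envy-freeness}, 2ACC and 2SCC are handled automatically because with $n=2$ and a non-wasteful allocation proportionality and envy-freeness coincide. The bulk of the work lies in 3SC, which I would split according to whether agent~$3$ picks $I_1'$ or not. Agent~$3$ never envies because she picks first from both $C'$ and $T$. Agent~$2$ either takes $I_1'$ (whose value she equated with her second-best unaltered piece by construction) or is free to choose her favorite from $\{I_2,I_3\}$; in either case, denoting by $j\in\{2,3\}$ the one who does not hold $I_1'$, agent~$j$ divides $T$ into three $v_j$-equal parts and is last in line for $T$, so she extracts exactly $v_j(T)/3$ and envies no one on $T$. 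The delicate case is agent~$1$: she receives a leftover piece on $C'$ worth exactly $1/3$ to her and picks second from $T$. Letting $i$ denote the holder of $I_1'$, agent~$1$'s advantage on $C'$ over $i$ is $1/3-v_1(I_1')=v_1(T)$, whereas her possible deficit on $T$ against $i$ is at most $v_1(T)$, and the two quantities cancel. Against $j$, agent~$1$ has no $C'$-deficit, and picking second from $T$ she never envies the piece $j$ is left with.

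For \emph{non-envy-freeness} of $n$LD, $n$DS, and $n$EP, a single instance per procedure suffices. A standard recipe for $n=3$ is to take a profile in which one agent concentrates almost all her value on a narrow subinterval that, under truthful play, falls inside another agent's allocated share; the peaked agent then values the occupying agent's share at strictly more than $1/n$, producing strict envy. The main obstacle throughout is the book-keeping for 3SC; everything else reduces to a short induction or an explicit calculation.
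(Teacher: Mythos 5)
The paper states this lemma explicitly \emph{without proof}, citing it as a collection of well-known facts, so there is no in-paper argument to compare against; your write-up is the standard textbook proof and is essentially correct. Two points deserve attention. First, your blanket claim that ``agent 3 never envies because she picks first from both $C'$ and $T$'' is wrong in the case where agent 3 does \emph{not} take $I_1'$: the order of choice on $T$ is $i$, then $1$, then $j$, so when agent $3=j$ she picks \emph{last} from $T$. She is still protected there, but only because she herself cut $T$ into three $v_3$-equal parts --- which is exactly the argument you give later for the generic agent $j$; you should fold agent 3 into that case rather than assert she always picks first. The rest of the 3SC analysis (agent 1's $C'$-surplus of $v_1(T)$ over the holder of $I_1'$ exactly covering her worst-case $T$-deficit, and her picking before $j$ on $T$) is the correct and standard accounting. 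Second, the non-envy-freeness claim is only gestured at: no instance is actually exhibited or verified, and the recipe as phrased is slightly off for $n$LD/$n$DS, since under truthful play a peaked agent will diminish to protect her peak; the envy arises because she is then forced to exit with only $1/n$ of that peak while another agent receives the remainder of it, or, even more simply, because an agent who exits with exactly $1/n$ may value the residual $\tfrac{n-1}{n}$ as being split unevenly between the survivors. A one-line concrete profile per family (e.g., one uniform agent plus agents with value concentrated near the right end) would close this. Everything else --- the two-agent cases, the equivalence of proportionality and envy-freeness for $n=2$ with non-wasteful allocations, and the inductions for $n$LD, $n$DS and $n$EP --- is sound.
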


The previous lemma gives us our first hypothesis, which refers to dividing a cake among three agents.
\begin{hypothesis}
\label{hypo:fair}
Allocations received under 3SC are {\it fairer} than those received under 3DS and 3LD, i.e. generate fewer cases of envy.
\end{hypothesis}

\subsection{Incentive Properties}
Another important goal of cake-cutting procedures is to give incentives to agents to reveal their true (privately known) valuation function to a mediator who, after receiving the report from all agents, conducts a division procedure. The valuation function is partially revealed via a series of cake cuts or choices between pieces of cake. Although the mediator does not know the valuations, it is assumed that every agent knows the other agents' valuations.

In a strategic framework, given a cake $[0,1]$ and a set of agents $N$, a procedure $p$ is a function from the revealed valuation function of each agent to an allocation $A$. We write $p_i(v_i, v_{-i})=A_i $ to denote the cake allocated to agent $i$ by procedure $p$, where $v_{-i}$ denotes the reported valuation functions of all other agents except $i$. We use the following standard property to study which procedures are robust to strategic behavior.
\begin{definition}The procedure $p$ is {\it strategy-proof} if for every agent $i$, every collection of valuation functions $(v_i,v_{-i})$, and every other valuation function $v'_i$,
\begin{equation}
v_i(p_i(v_i, v_{-i})) \geq v_i(p_i(v'_i,v_{-i}))
\end{equation}
\end{definition}

Note that the definition is rather demanding: a procedure is strategy-proof only if behaving non-strategically is a dominant strategy for every player.\footnote{This is the standard notion of strategy-proofness in mechanism design. For weaker notions see \cite{brams2006,brams2008} and \cite{ortega2019obvious}.} Therefore, it is not surprising that:
\begin{lemma}
\label{thm:strategyproof}
2ACC, 2SCC, $n$DS$, n$LD, $n$EP and 3SC are all not strategy-proof.
\end{lemma}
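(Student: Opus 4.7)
My plan is to exhibit, for each of the six procedures, a concrete cake-cutting instance in which some agent can strictly improve her payoff by misreporting. Since ``not strategy-proof'' requires only a single counterexample per procedure, I can use small instances with piecewise-constant valuation densities on $[0,1]$. The unifying observation is that each honest strategy prescribes a cut based only on the acting agent's own valuation, while her realized payoff depends on the other agents' valuations; knowledge of the latter therefore opens room for a profitable shift.

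For 2ACC I take agent~1 uniform on $[0,1]$ and agent~2 uniform on $[1/2,1]$. Honest play cuts at $1/2$ and gives agent~1 value $1/2$; reporting $x_1=1/2+\varepsilon$ leaves agent~2's preferred piece on the right and enlarges agent~1's left piece to value $1/2+\varepsilon$. With the same valuations, 2SCC has honest cuts $x_1=1/2$ and $x_2=3/4$, giving agent~1 the piece $[0,5/8)$ worth $5/8$; reporting $x_1=1/2+\varepsilon$ keeps her leftmost and shifts the midpoint to $5/8+\varepsilon/2$, again a strict improvement.

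For $n$DS and $n$LD I use the same family: agent~1 uniform and every other agent uniform on $[1/2,1]$, so the other agents' honest cuts lie at $1/2+1/(2n)$ while agent~1's lies at $1/n$. In $n$DS, agent~1 can shift her report to just below $1/2+1/(2n)$, remain leftmost, and exit with value nearly $1/2+1/(2n)>1/n$. In $n$LD a short case analysis shows that the other agents have no incentive to diminish any cut in $(1/2,\,1/2+1/(2n))$, because any piece in that range is worth less to them than what they can secure in the residual subgame; agent~1 therefore takes $[0,1/2+1/(2n)-\varepsilon]$ unchallenged, again strictly above $1/n$. For $n$EP with $n=4$ I use four agents whose densities are each supported on a distinct quarter of $[0,1]$; honest play assigns two agents per recursive half, and in the resulting two-agent subproblem an agent can shift her reported cut to capture her entire support rather than half of it, beating her honest payoff.

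The main obstacle is 3SC, because agent~1's report consists of three cuts rather than a scalar and her payoff depends both on agent~3's choice among the three pieces and on agent~2's division of the trimmings. My plan is to take agent~1 uniform and agents~2 and~3 both uniform on $[0,1/2]$, so that under the honest cuts at $1/3$ and $2/3$ the rightmost piece is worthless to them and predictably falls to agent~1, while the trimming of $[0,1/3]$ leaves only a small slice for agent~1 in the three-way split, yielding total value $7/18$. I then perturb agent~1's cuts rightward, say to $1/2$ and $3/4$, so that the leftmost piece becomes so attractive that almost all of it enters the trimming region $T$; the delicate step is accounting for the three-way split of $T$ and verifying, against every admissible tie-breaking of agent~3's indifferences, that agent~1's total value strictly exceeds $7/18$ (a direct calculation gives $5/12$). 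Combining the six counterexamples then proves the lemma.
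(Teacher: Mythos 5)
Your proposal is correct, but it takes a different route from the paper, which in fact offers no proof of this lemma at all: the text simply declares it well-known and points to the manipulability examples in \cite{brams2006}. The closest the paper comes to an argument is the appendix proof of the stronger Lemma~\ref{lem:max-strategising-payoff}, which establishes $\frac{n-1}{n}$-strategy-proofness with tightness by exhibiting \emph{extremal} instances (agents with disjoint supports $v_i(\frac{i-1}{n},\frac{i}{n})=1$, plus a tailored instance for 2SCC and a careful one for 3SC) in which truth-telling yields exactly $1/n$ while a deviation yields the full value $1$; non-strategy-proofness follows a fortiori. Your construction instead produces, for each procedure, a modest but strict gain from a small perturbation of the truthful report (e.g.\ $5/12$ versus $7/18$ in 3SC, $5/8+\varepsilon/2$ versus $5/8$ in 2SCC), which suffices for the lemma as stated but would not recover the tightness claim of Lemma~\ref{lem:max-strategising-payoff}. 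Your individual counterexamples check out: the 2ACC/2SCC/$n$DS calculations are immediate; for $n$LD the piece $[0,\,1/2+1/(2n)-\varepsilon]$ is indeed worth less than $1/n$ to the truthful opponents, so they do not diminish; for $4$EP the first-round cut can be kept truthful so only the subproblem cut is manipulated; and in 3SC the payoff $1/4$ from the main division is obtained under every tie-breaking of agent~3's indifference, as you note. The one point worth flagging is that your 3SC instance forces agent~2 to trim her favourite piece down to value $0$, a degenerate boundary case of the trimming step (the paper's own 3SC instance for Lemma~\ref{lem:max-strategising-payoff} has the same feature and handles it by declaring the trimmed residue negligible), so you should state the convention explicitly rather than leave it implicit.
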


Lemma \ref{thm:strategyproof} is also well-known; \cite{brams2006} in particular discuss many examples of how all these procedures can be manipulated.

A related question is how much agents can gain by acting strategically compared to their guaranteed payoff obtained with non-strategic behavior in any of the procedures we have described.\footnote{We use the term ``non-strategic'' for what often in the literature is called truthful behavior (e.g. \citealp{chen2013}). All the procedures that we consider are often studied as a series of cut-and-evaluate queries that a mechanism designer asks to agents; the so-called Robertson--Webb framework. Every manipulation can be associated to an insincere answer to a query in the Robbertson--Webb framework. Non-strategic behavior may be referred to as truthful or straightforward, as a referee pointed out.} We answer this question by considering the notion of $\epsilon$-strategy-proofness, which has recently been suggested in the literature \citep{Menon2017}. In layman terms, a cake-cutting procedure is $\epsilon$-strategy-proof if there is no cake-cutting problem for which a misrepresentation of preferences guarantees more than $\epsilon$ utility compared to non-strategic behavior.\footnote{Formally, for any $\epsilon \in [0,1]$, the procedure $p$ is \emph{$\epsilon$-strategy-proof} if for every agent $i$, every collection of valuations functions $(v_i, v_{-i})$, and every other valuation function $v_i'$, $v_i(p_i(v_i, v_{-i}))\geq v_i(p_i(v_i', v_{-i}))-\epsilon$.
} A proportional procedure should ideally have  $\epsilon=0$, and in the worst case $\epsilon=1-\frac{1}{n}$: this means that non-strategic behavior guarantees an agent $\frac{1}{n}$, whereas lying yields the maximum utility possible (1). Unfortunately, we show that all the procedures we consider can offer the largest incentives for preference manipulation.

\begin{proposition}\label{lem:max-strategising-payoff}
The procedures 2ACC, 2SCC, $n$DS, $n$LD, $n$EP, 3SC are $\left(1-\frac{1}{n}\right)$-strategy-proof and this is tight.
\end{proposition}

We postpone the constructive proof of Proposition \ref{lem:max-strategising-payoff} to the Appendix.

These two results suggest that if agents know their opponents' preferences in real-life cake-cutting, they should behave strategically if the cake-cutting problem admits a successful manipulation. This is our second hypothesis.
\begin{hypothesis}
\label{hypo:strategy1}
Agents who know their opponents' preferences behave strategically in 2ACC, 2SCC, $n$DS$, n$LD, $n$EP and 3SC.
\end{hypothesis}
The assumption that agents know their partners' valuations is a strong one, yet necessary for agents to manipulate the procedure to their advantage with certainty of success. 
Without such knowledge, an agent might perform a manipulation that will decrease her utility. 
Therefore, in the fair division literature, it is often claimed that strategic manipulation is not an issue when people do not know their partners' preferences (see e.g. \citet{gal2017fairest}).

However, in real life, agents may have a partial knowledge about their partners' preferences, particularly if they have interacted with those partners previously. In those cases, an agent is able to learn the other agents' valuations through experimentation, i.e. choosing different strategies each interaction in order to eventually improve their own allocation. This simple observation provides us with our final hypothesis.
\begin{hypothesis}
\label{hypo:strategy2}
Agents who do not know their opponents' preferences but who repeatedly interact with them, successfully learn their opponents' preferences and do behave strategically in 2ACC, 2SCC, 3SC, $n$DS$, n$LD, $n$EP.
\end{hypothesis}

\subsection{Fairness and Incentives}
It is important to note a dependency between our three hypotheses.
Hypothesis \ref{hypo:fair} states that 3SC is fairer than 3LD and 3DS since it generates envy-free allocations. However, this envy-freeness is guaranteed only when all agents report their preferences non-strategically. 
In contrast, Hypotheses \ref{hypo:strategy1} and \ref{hypo:strategy2} state that people behave strategically. 
If all agents behave strategically, then in general, all three procedures discussed --- 3SC, 3LD and 3DS --- generate envy \citep{branzei2016}.
However, hypothesis \ref{hypo:fair} still holds if the procedures are used by a population in which a fraction $\alpha$ of agents behave non-strategically. Then 3SC guarantees envy-freeness in at least $\alpha$ cases, and thus it is reasonable to expect that it would still be perceived as fairer than 3LD and 3DS. As a consequence, Hypothesis \ref{hypo:fair} extends to cases in which a constant fraction of the agents behave non-strategically.\footnote{The fraction $\alpha$ is in fact not constant but specific to each procedure. However, from our lab experiments we found that the fraction of agents who behave non-strategically in 3SC is larger than in 3DS and 3LD, and thus it is safe to expect that 3SC is indeed perceived fairer than 3DS and 3LD.}

A related interesting question is whether envy can be generated in 3SC when only one agent misreports her preferences, while the other agents are non-strategic. This question is particularly relevant to our experimental setting in EXP1, since the computerized agents are non-strategic so only the single human subject might act strategically.\footnote{\cite{branzei2016} prove that envy can be generated in Nash equilibrium of 3SC, but their proof crucially relies on the assumption that all three agents behave strategically.
}
We answer this question in the affirmative by showing that:
\begin{proposition}\label{lem:envy-when-strategizing}
Envy can be generated in 3SC with just one agent misrepresenting her preferences. This agent achieves a higher payoff at the cost of being envious.
\end{proposition}
The proof is postponed to the Appendix. 

\section{Experiment}
\label{sec:experiment}
In this section we present the set-up of our two experiments in detail. We begin with their design and proceed to the implementation.
\subsection{Design}
We conducted two experiments in which subjects divide cakes using the aforementioned procedures. In the first one (EXP1), subjects divide a cake against automated non-strategic agents. Subjects are told that they will divide cake against agents, but they do not know that the agents are automata that behave non-strategically. In the second experiment (EXP2), 
subjects divide a cake against other subjects (real people who behave strategically). In EXP1, subjects divide cakes against 1, 2 and 3 opponents, whereas in EXP2 subjects only divide cakes against 1 or 2 opponents (because EXP2 is significantly more time consuming for subjects, since they have to wait for their peers' decisions). The procedures used in EXP1 and EXP2 are described in {\bf Table \ref{tab:procedures}} below. In EXP1, procedures are presented in a fixed order so that subjects face the easiest procedures first, whereas in EXP2 the procedures are presented randomly, to exclude the possibility of order effects (see {\bf Table \ref{tab:order}}).

\begin{table}[!htbp]
\caption[caption,justification=centering]{Summary of cake-cutting procedures in EXP1 and EXP2.}
\label{tab:procedures}
\centering
\begin{tabular}{l|ccccc}
	\toprule
	Procedure                      & Agents & EXP1 & EXP2 & Envy-free & Prop \\
	\midrule
	Cut and choose (2ACC)                & 2      & \ding{51}    & \ding{51}    & \ding{51}         & \ding{51}            \\
	Cut middle          (2SCC)           & 2      & \ding{51}    & \ding{51}    & \ding{51}         & \ding{51}            \\
	Selfridge--Conway  (3SC)  & 3      & \ding{51}    & \ding{51}    & \ding{51}         & \ding{51}            \\
	Dubins--Spanier ($n$DS) & $n$      & \ding{51}($n=3,4$)  & \ding{51}($n=2,3$)  & $\times$        & \ding{51}            \\
	Last Diminisher ($n$LD) &$n$      & \ding{51}($n=3,4$)    & \ding{51}($n=3$)    &     $\times$   & \ding{51}            \\
	Even--Paz   ($n$EP)                    & $n$      & \ding{51} (4)&$\times$  & $\times$        & \ding{51}           \\
	\bottomrule 
\end{tabular}
\end{table}

\begin{table}[!htbp]
\caption[caption,justification=centering]{Order of cake-cutting procedures in EXP1 and EXP2.}
\label{tab:order}
\centering
\begin{tabular}{l|c}
	\toprule
&Order of procedures\\
	\midrule
	EXP1&2ACC, 2SCC, 3DS, 4DS, 3LD, 4LD, 4EP, 3SC\\
	EXP2&Random \\
	\bottomrule
\end{tabular}
\end{table}

We change the names of the procedures to make it easier for the subjects to understand them. In EXP1, we use the following names: {\it I Cut You Choose} (for 2ACC), {\it Cut Middle} (for 2SCC), {\it Leftmost Leaves} (for $n$DS), {\it Last Challenger} (for $n$LD), {\it Super Fast} (for 4EP) and {\it Super Fair} (for 3SC). In EXP2, we change the name of {\it Super Fair} to {\it Double Knife} to avoid experimenter demand effects.
	
Each cake is divided 7 times. We call each of these divisions a {\it round}. During the first five rounds, the subjects don't know their opponents' valuations. In the remaining two rounds, the subjects observe their opponents' valuations. We give subjects 5 rounds to experiment and learn their opponents' valuations. The valuations of the subjects (and the automata) are constant during the 7 rounds in each procedure, but they change across procedures. In EXP1, the subject makes the first cut in all procedures that are sequential. In EXP2, the roles of agents are assigned at random in each procedure, but remain constant during the seven rounds.

In all procedures, the cake is a line and the subjects' and automated agents' valuations are normalized so that $v_i(0,1)=120$. In other words, the subject and the automata can obtain a maximum of 120 points if they obtain all of their desired parts of the cake. We chose 120 because it is easily divisible by 2, 3, and 4. Subjects are shown their valuations on the computer screen. Their valuations are given by a set of subintervals which are deemed desirable, while the rest of the $[0,1]$ interval is not (valuations are the same in EXP1 and EXP2, and are presented in the Appendix). All desired intervals of the same length yield the same payoff; such valuations are known as {\it piecewise uniform}. The cake can only be cut in a position $x$ so that $v_i(0,x)$ equals an integer number between 0 and 120. A  representative screen that subjects observe during EXP1 and EXP2 is shown in {\bf Figure \ref{fig:interface}}.\footnote{The graphical interfaces can be downloaded from our website. EXP2 is implemented using the o-Tree software \citep{chen2016otree}. It can be played online at \url{https://cakecut.herokuapp.com/demo/}.
}

\begin{figure}[!htbp]
	\centering
		\includegraphics[width=.7\textwidth]{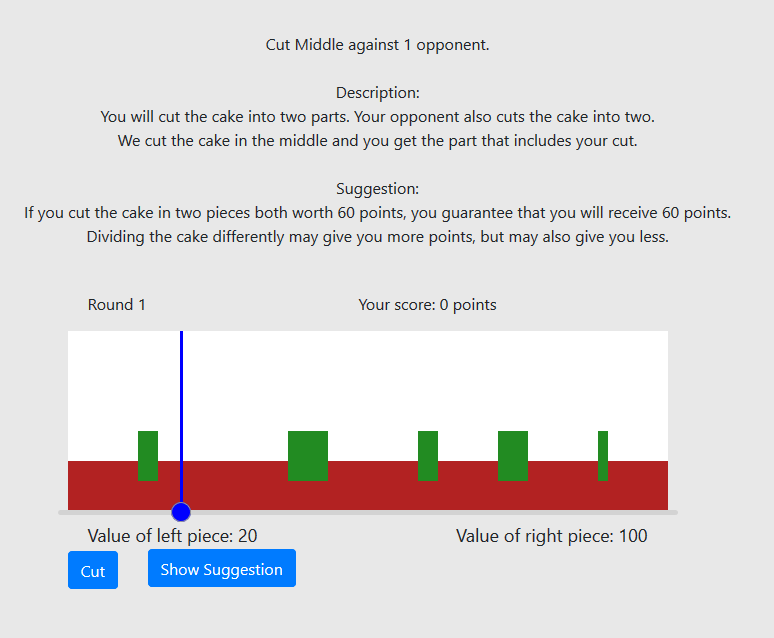}
	\caption{An example of our experimental interface. The cake is depicted as a brown line, while the desirable parts of the subject are emphasized with a green color.}
	\label{fig:interface}
\end{figure}

After completing each round, subjects are told what share of cake they got, and the valuation of the shares that the other players received, calculated by their own valuation function. We chose the valuations so that strategic behavior yields substantial benefits over non-strategic behavior (these are provided in the Appendix). The subjects are given the suggestion to cut the cake non-strategically, but are also explicitly told that they can choose another strategy that may give them more or less points than the non-strategic one. Ties are broken according to agents’ index in the procedure: in EXP1 ties are always broken in favor of the human subject, whereas in EXP2 they are broken according to the index of the participants, which is assigned randomly. 

The payment in EXP1 was in GBP, whereas in EXP2 was in EUR. In both EXP1 and EXP2, the highest payment achievable in EXP1 and EXP2, through strategic behavior, is 29 currency units, whereas the lowest is 5 currency units (i.e either GBP or EUR), which subjects receive for showing up. In EXP1, in addition to the 5 currency unit payment for showing up,  2 rounds are randomly selected from all procedures, and subjects are paid the number of points they obtained in both procedures divided by 10. For example, if in the two randomly selected rounds the subject obtains 120 and 80 points, then she receives 12 GBP + 8 GBP +  GBP 5 = 25 GBP. In EXP2, in addition to the 5 currency unit payment for showing up, the subjects are paid the average payoff across rounds divided by 5. 

In EXP2, before the experiment begins, we give the subjects one practice round that is not relevant to their payoffs so that they familiarize themselves with the graphical interface and the procedures.

\subsection{Implementation}
EXP1 was conducted at the EssexLab facilities at the University of Essex during July 2018. EXP2 was conducted at the mLab facilities at the University of Mannheim and at the AWI-Lab at the University of Heidelberg during September and November 2019. In both experiments, most of the experiment participants were undergraduate students.

Upon their arrival to the lab, subjects were randomly assigned to a computer. They signed a consent form and were given the experiment instructions in a short presentation by the principal investigator (these are provided in the Appendix). They were allowed to ask questions during and after the instructions were given. After all questions had been answered, the subjects were allowed to start the experiment. Subjects were not allowed to communicate with other subjects during both experiments (except possibly through their actions on the platform). In EXP1, the role of the participant in the procedure is that of the first cutter.  In EXP2, subjects are assigned to groups (of two or three) with others randomly in each procedure, and their role in each procedure is assigned randomly, and maintained during the entire 7 rounds. After the experiment ended, subjects were paid in private and dismissed.
\begin{figure}[t]
	\centering
	\begin{subfigure}{.45\linewidth}
		\includegraphics[width=.85\textwidth]{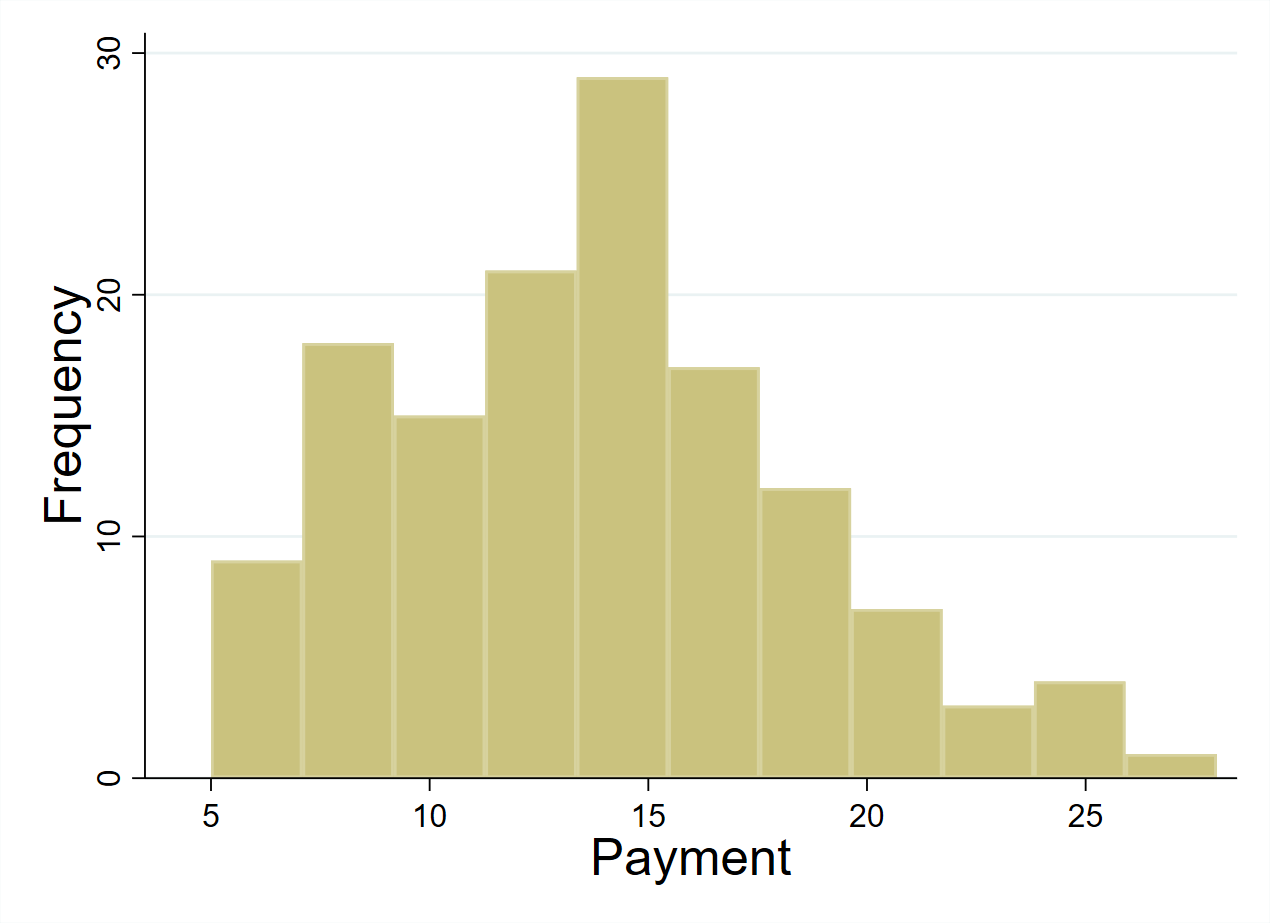}
		\caption{EXP1.}
		\label{fig:histogramessex}
	\end{subfigure}
	\begin{subfigure}{.45\linewidth}
		\includegraphics[width=.85\textwidth]{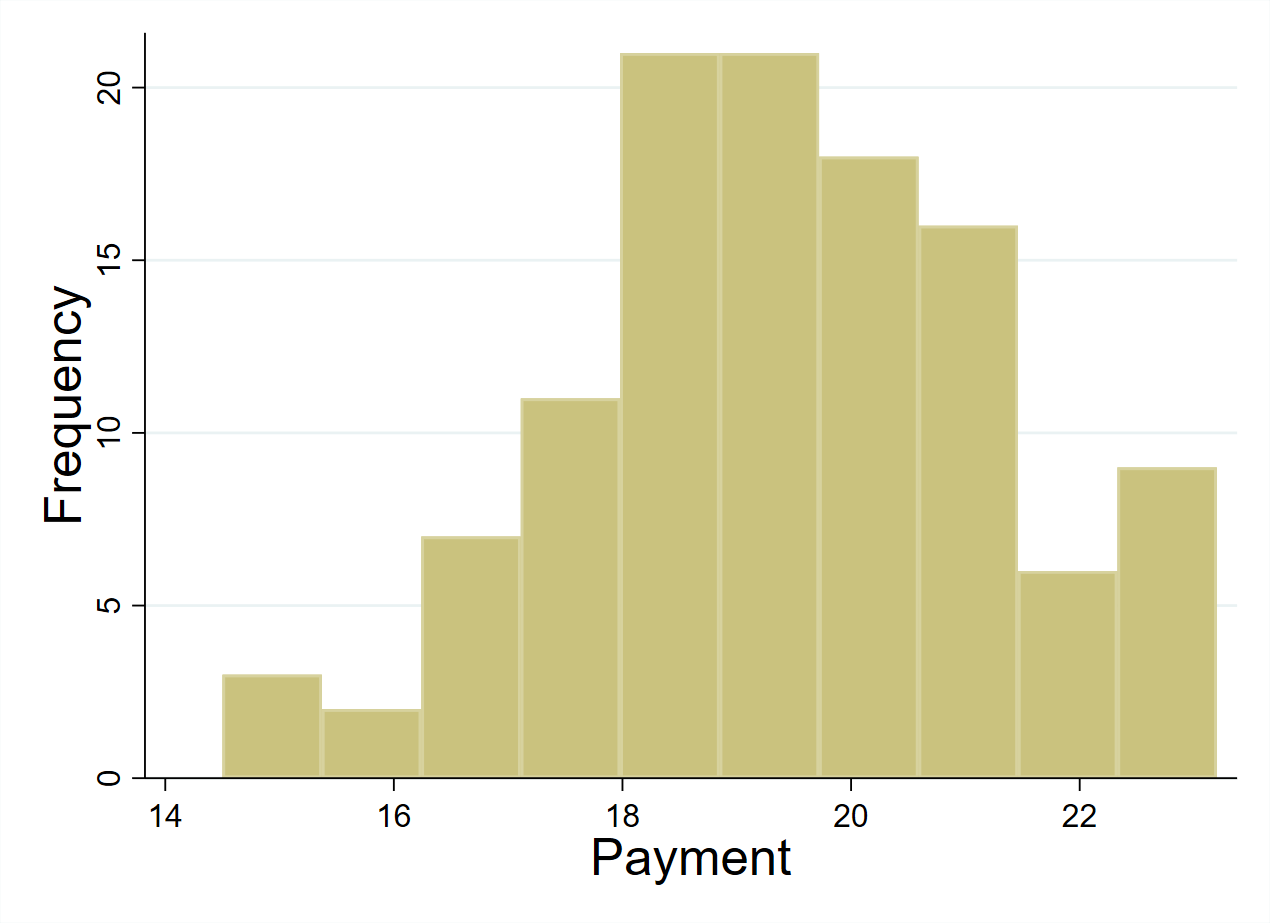}
		\caption{EXP2.}
		\label{fig:histogrammannheim}
	\end{subfigure}
	\caption{Distribution of Payments.}
	\label{fig:histogram}
\end{figure}

We had 136 participants in EXP1 and 114 participants in EXP2 (in EXP1 we lost 3 observations due to technical problems so our final participant pool size is 133; for a detailed split to sessions see Table \ref{tab:sessions}). The distribution of payments is shown in {\bf Figure \ref{fig:histogram}}. 

\begin{table}[!htbp]
\caption[caption,justification=centering]{Summary of lab sessions.}
\label{tab:sessions}
\centering
\begin{tabular}{l|cc}
	\toprule
	             & \multicolumn{2}{c}{Number of subjects} \\
	                      & EXP1 & EXP2 \\
	\midrule
	Session 1                & 9     & 18\\
	Session 2                & 12     & 18\\
	Session 3                & 24     & 12\\
	Session 4                & 15     & 12\\
	Session 5                & 27     & 12\\
	Session 6                & 26     & 18\\
	Session 7                & 23     & 12\\
		Session 8                & -    & 12\\
			Session 9                & -    & 12\\
	Total&136 (133 full observations) &114 \\
	\bottomrule
\end{tabular}
\end{table}

\section{Results}
\label{sec:results}
In this section we present and discuss the findings of our experiments. We provide results with respect to the extent of manipulation, envy and fairness, opportunities for learning, and quality of learning,  and present them in separate subsections. We note that both experiments support the same claims which increases our confidence in the results.

\subsection{Manipulation}
\label{sub:results-manipulation}

\begin{figure}[!htbp]
	\centering
	\begin{subfigure}{.495\linewidth}
		\centering
		\includegraphics[width=.95\textwidth]{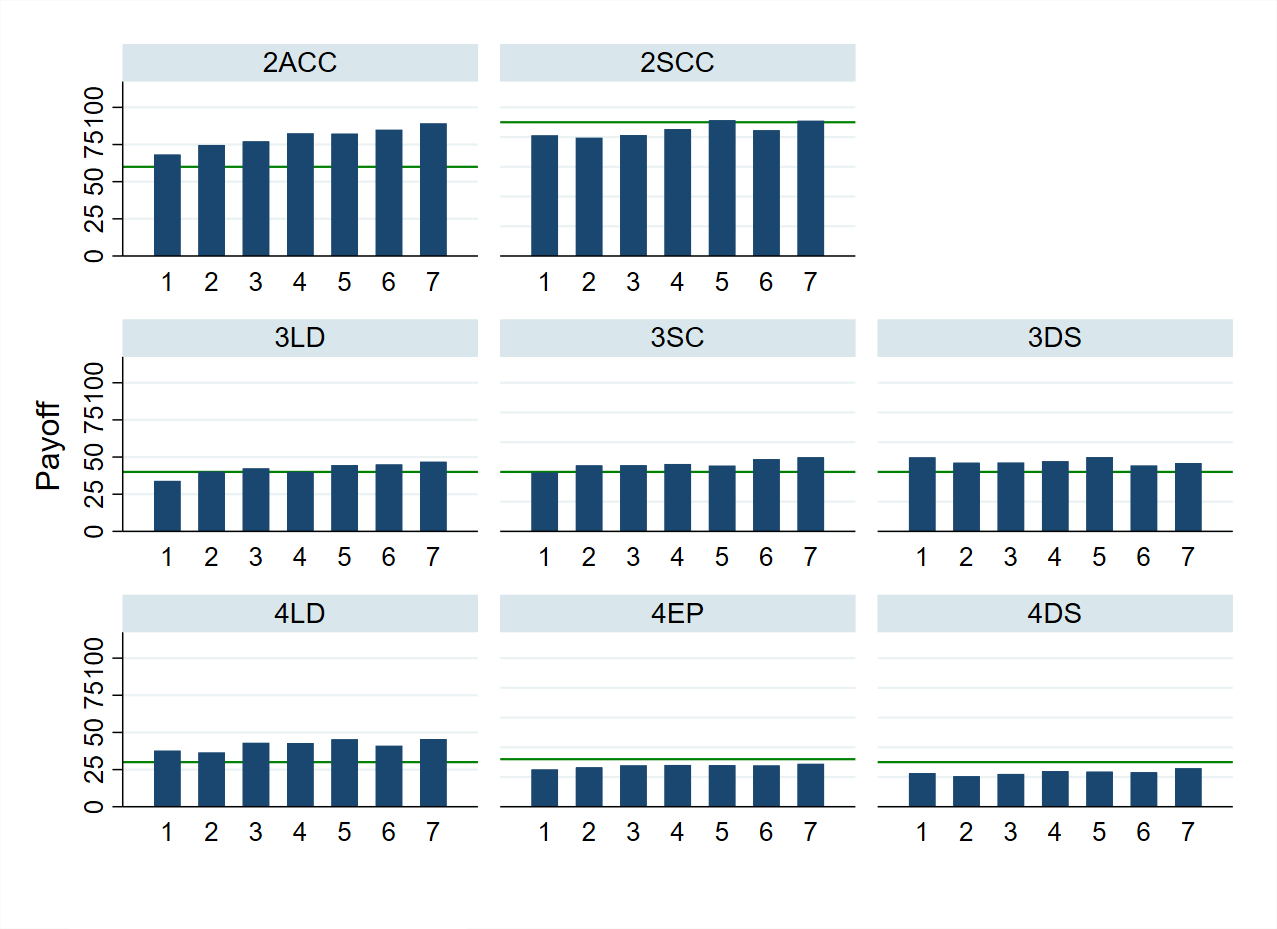}
		\caption{EXP1}
		\label{fig:comparison1}
	\end{subfigure}
	\begin{subfigure}{.495\linewidth}
		\centering
		\includegraphics[width=.95\textwidth]{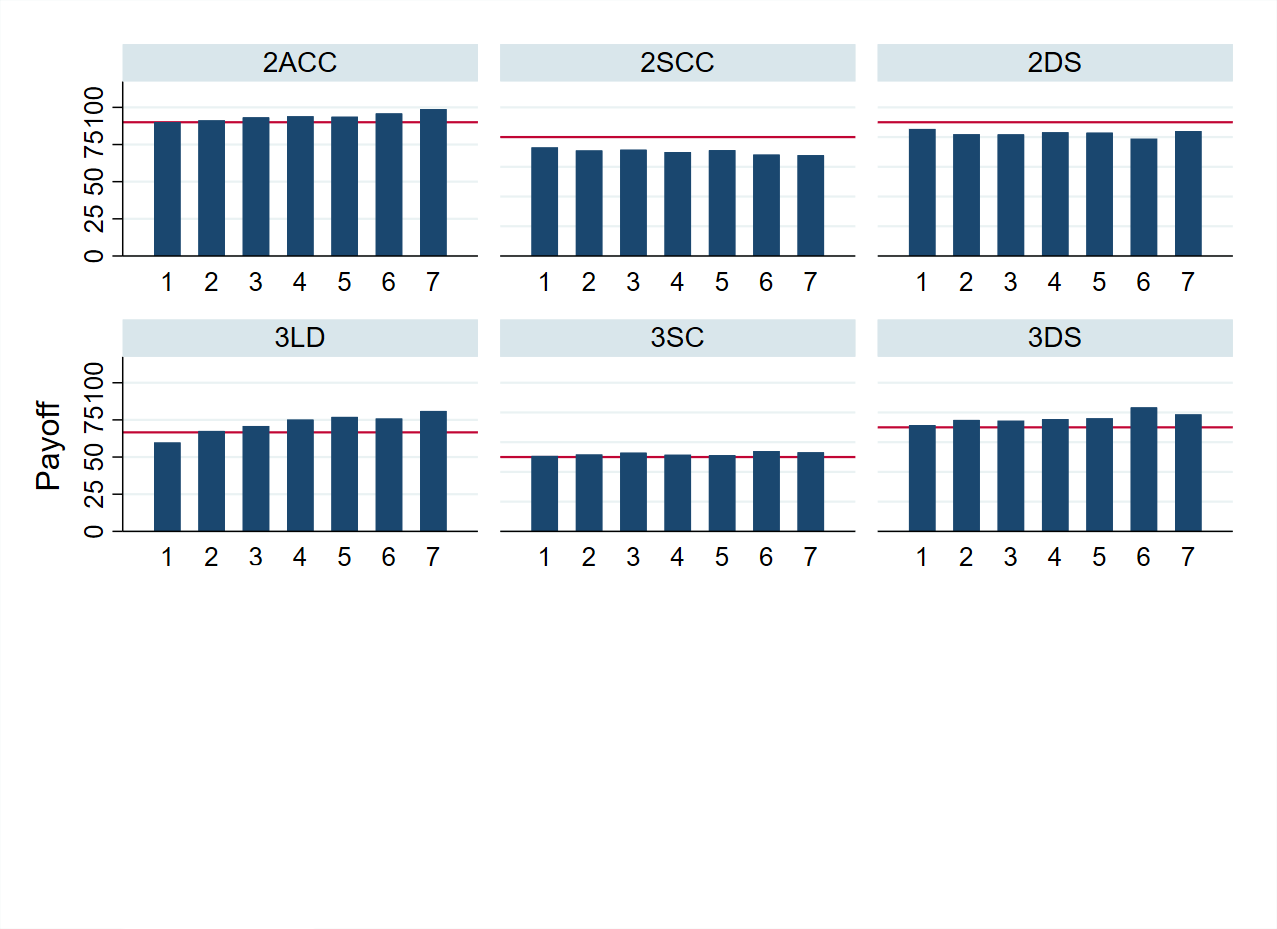}
		\caption{EXP2}
		\label{fig:comparison2}
	\end{subfigure}
	\caption{Average points obtained by the subjects in each round, by procedure. The green line in Fig. 3(a) corresponds to the payoff of the human subject when acting non-strategically. The red straight line in Fig. 3(b) corresponds to average payoff across all participants when they all behave non-strategically.
}
	\label{fig:comparison}
\end{figure}

The findings with respect to manipulation and learning can be previewed in {\bf Figure \ref{fig:comparison}}, which presents the average number of points obtained in each round, by procedure.

{\bf Figure \ref{fig:comparison}} makes evident that subjects manipulate the procedures, even before they have any information about their opponents' preferences. In the majority of procedures, the average payoff in the first round is lower than the one that could be obtained with non-strategic behavior, showing that agents manipulate the procedure even without having information about their opponents' preferences. In some other few cases (2SCC  in EXP1, 2ACC and 3DS in EXP2), we observe the opposite: that those early manipulations lead to higher average payoffs. Overall, payoffs increase in later rounds (as we describe formally in the next subsections), but learning does not always lead to higher payoffs (e.g. 2SCC in EXP2), and even when we observe improvements from learning, agents sometimes still would achieve higher payoffs if they had behaved non-strategically (e.g. 3LD in EXP1). The percentage of manipulations that yield a lower payoff than that guaranteed by proportionality (i.e. $\frac{120}{n}$) is 19\%, 32\% and 56\% for procedures with 2, 3 and 4 players in EXP1, and 19\% and 12\% for procedures with 2 and 3 players in EXP2.

{\bf Table \ref{tab:refrequest}} presents the average payment of strategic and non-strategic agents in each procedure, showing that the payoff for strategic agents is significantly different from that of non-strategic agents for most procedures in EXP1 and EXP2 (except in 3LD in EXP1 and in 2DS, 3DS and 3SC in EXP2).

{\bf Figure \ref{fig:truthful}} presents the percentage of non-strategic cuts/choices observed.  Because in some procedures some agents may be required to make more than one cut, we examine only the first cut.\footnote{We give a $\pm$ 5 pixel tolerance interval when defining non-strategic cake cuts to allow for mistakes. We conducted several robustness checks changing the tolerance to 0 and $\pm$ 10 pixels. The results were almost identical.} In the first round, non-strategic behavior exceeds strategic behavior in only two procedures in EXP1 (2SCC and 3SC) and only two procedures in EXP2 (2ACC and 3SC).\footnote{The large amount of non-strategic behavior in 2ACC in EXP2 is due to the half of the participants who only choose among cut pieces and thus could never benefit from manipulation.} We observe higher rates of non-strategic behavior in envy-free procedures. We compare the difference between the percentage of non-strategic behavior observed in three-agent envy-free procedures (3SC) versus three-agent proportional procedures (3DS, 3LD). The difference is of 17 and 29 percentage points for EXP1 and EXP2, respectively, and is statistically significant in both cases. To see this, we perform a cluster-adjusted t-test (clustering at the subject level in EXP1 and at the subject and session level in EXP2)\footnote{We do not cluster at the session level in EXP1 because subjects play against automata and never interact with other participants in their session.}. In all cases, the corresponding p-value is smaller than 0.001.
\begin{figure}[!htbp]
	\centering
	\begin{subfigure}{.495\linewidth}
		\centering
		\includegraphics[width=.95\textwidth]{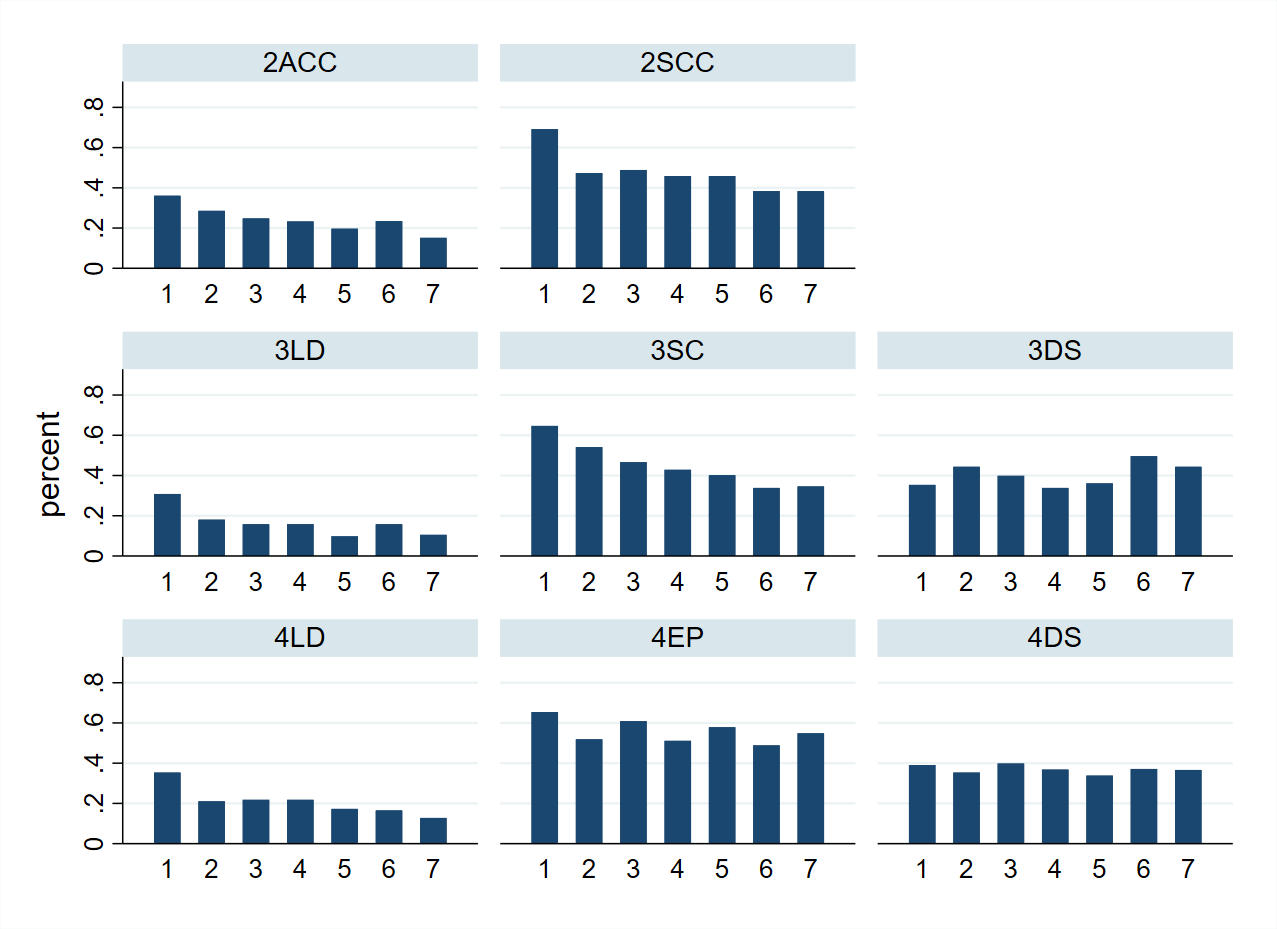}
		\caption{EXP1.}
		\label{fig:comparison14}
	\end{subfigure}
	\begin{subfigure}{.495\linewidth}
		\centering
		\includegraphics[width=.95\textwidth]{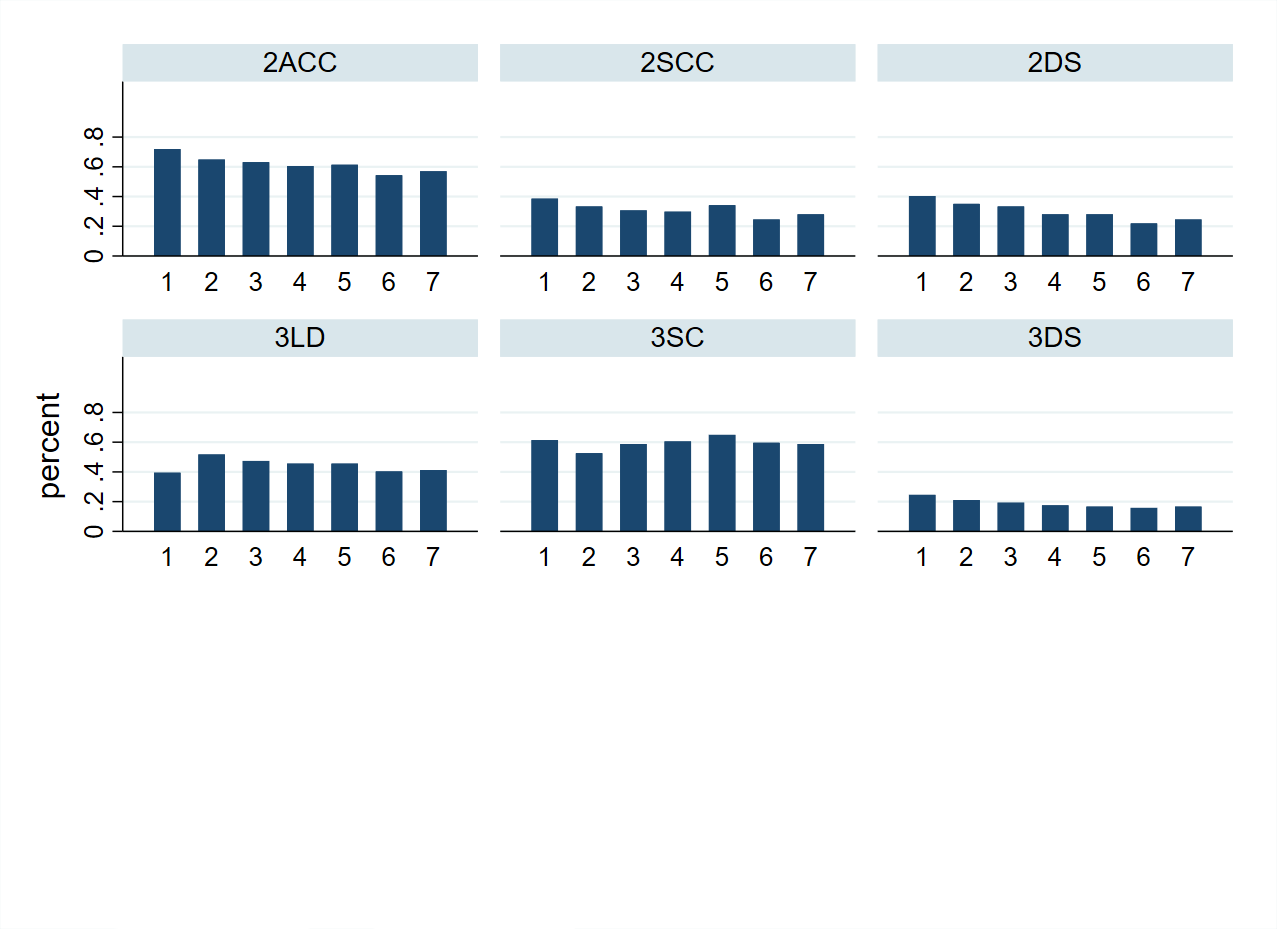}
		\caption{EXP2.}
		\label{fig:comparison24}
	\end{subfigure}
	\caption{Percentage of non-strategic cake cuts/choices, by procedure over rounds.}
	\label{fig:truthful}
\end{figure}

\begin{table}[h!]
	\begin{center}
		\caption[caption,justification=centering]{Average points obtained by sincere and strategic agents.}
		\label{tab:refrequest}
		\resizebox{\textwidth}{!}{
			\begin{tabular}{llllllllll}
				\toprule
				& 2ACC 		& 2SCC & 2DS & 3DS & 3LD & 3SC & 4DS & 4EP & 4LD  \\
				\midrule
				\multicolumn{10}{l}{\it EXP1}                                                        \\
				Non-strategic & 60          & 90         &  $\cdot$    & 40   & 40  & 40  & 30  & 32  & 30  \\
				Strategic  & 86          & 80         &  $\cdot$    & 52   & 42  & 49  & 19  & 22  & 45  \\
				Difference  & 26         & -9         &   $\cdot$  & 12    & 2   & 9   & -11   & -10    & 15   \\
				\footnotesize p-value* & \footnotesize 0.00  & \footnotesize 0.00  &   $\cdot$  & \footnotesize 0.00   & \footnotesize 0.46 & \footnotesize  0.00 & \footnotesize 0.00   & \footnotesize  0.00 & \footnotesize   0.00     \\
				\midrule
				\multicolumn{10}{l}{\it EXP2}                                                        \\
				Non-strategic & 98          & 73         & 84   & 75   & 79  & 53  &  $\cdot$   &  $\cdot$   &  $\cdot$   \\
				Strategic  & 88          & 69         & 82   & 77   & 68  & 51  &  $\cdot$   & $\cdot$   &  $\cdot$   \\
				Difference  &   -10         & -4        &   -2  & 2   & -11  &  -2 & $\cdot$   & $\cdot$    & $\cdot$    \\
				\footnotesize p-value* & \footnotesize 0.01 & \footnotesize 0.04  &   \footnotesize 0.52  & \footnotesize  0.72  & \footnotesize 0.01 & \footnotesize 0.28  & $\cdot$    & $\cdot$   & $\cdot$        \\
				\footnotesize p-value** & \footnotesize 0.02 & \footnotesize 0.03  &   \footnotesize 0.42  & \footnotesize  0.64  & \footnotesize 0.00 & \footnotesize 0.10  & $\cdot$    & $\cdot$   & $\cdot$        \\
				\bottomrule
			\end{tabular}
		}
		\begin{tablenotes}
			\item \footnotesize We report the p-value for a cluster-adjusted t-test testing the null hypothesis that the difference is zero. One asterisk indicates standard errors clustered at the subject level, two indicate clustering at the session level. Rounding errors sometimes cause the difference to not match the original values exactly (up to 1 digit).
		\end{tablenotes}
	\end{center}
\end{table}

We observe that strategic behavior increases in later rounds after subjects learn their opponents' preferences. Indeed, we conduct a logit regression of the probability of playing non-strategically on the round number. The obtained coefficients are -0.1 and -0.07 for the first and second experiments, respectively and are statistically significant (p-value in both cases $<$ 0.001, see {\bf Table \ref{tab:regression1}}). Thus, we conclude that:
\begin{table}[!h]
	\begin{center}
		\caption[caption,justification=centering]{Probability of non-strategic play explained by round number.}
		\label{tab:regression1}
\begin{tabular}{lcccc}
	\toprule
	& \multicolumn{4}{c}{Dependent Variable:}    \\
	& \multicolumn{4}{c}{Prob. of   non-strategic play}    \\
\cline{2-5}
	& \multicolumn{2}{c}{EXP1} & \multicolumn{2}{c}{EXP2}      \\
\cline{2-5}
	Round number   & -0.10      & -0.10   	& -0.07   & -0.07       \\
	Standard error & 0.013    & 0.014   	& 0.015   & 0.146      \\
	Subject FE     & No          & Yes      & No     & Yes    \\
	Session FE     & -            & -         & No     & Yes     \\
	Observations   & 7,441       & 7,441    & 4,746  & 4,746\\
	P-value        & 0.000       & 0.000    & 0.000   & 0.000       \\
	Clusters       & 133         & 133      & 113    & 8      \\
	\bottomrule
\end{tabular}

{\footnotesize	Logistic regression. Standard errors clustered at the subject level in the first three columns and at the session level in the fourth column. Removing subject fixed effects (FE) changes the results minimally.}
\end{center}

\end{table}

\begin{result}
	Subjects manipulate (often unsuccessfully) all the division procedures. Envy-free procedures are significantly less manipulated than proportional ones. Strategic behavior increases with learning.
\end{result}

Our findings are in line with those of \cite{HortalaVallve2010Simple} who, in a different fair division procedure, in which two subjects vote for a series of issues, also document strategic behavior increasing with learning over time.

\subsection{Envy and Fairness}
\label{sub:results-envy}

Envy emerges in all of the division procedures, even envy-free ones, although at quite different rates. The percentage of cases in which envy emerges in each procedure is summarized in {\bf Table \ref{tab:envy}}.
\begin{table}[!h]
\begin{center}
\caption[caption,justification=centering]{Percentage of cases where envy is generated, by round.}
\label{tab:envy}
\resizebox{\textwidth}{!}{
\begin{tabular}{l|lllll|l|ll|l|l}
	\toprule
	$\qquad$  Round	& 1  & 2  & 3  & 4  & 5  & No knowl. & 6  & 7  & Knowl. & Total \\
	Proc &&&&&&average&&&average&average\\
	\midrule
	\multicolumn{10}{l}{\it EXP1}\\
	2ACC & 10 & 8  & 8  & 5  & 8  & 8   & 5  & 3  & 4   & 7     \\
	2SCC & 16 & 25 & 23 & 20 & 13 & 19  & 17 & 14 & 16  & 18    \\
	3DS  & 57 & 68 & 62 & 62 & 59 & 62  & 74 & 68 & 71  & 64    \\
	3LD  & 56 & 53 & 51 & 53 & 43 & 51  & 44 & 40 & 42  & 49    \\
	3SC  & 31 & 31 & 29 & 29 & 32 & 30  & 23 & 25 & 24  & 29    \\
	4DS  & 64 & 86 & 84 & 78 & 77 & 78  & 78 & 73 & 76  & 77    \\
	4LD  & 62 & 66 & 53 & 54 & 50 & 57  & 53 & 44 & 49  & 55    \\
	4EP  & 97 & 92 & 94 & 93 & 92 & 94  & 91 & 89 & 90  & 93   \\
	\midrule
	\multicolumn{10}{l}{\it EXP2}\\
		2ACC & 4&	4&	4&	4&	5&	4&	5&	2&	3.5 &	4\\
	2SCC & 17&	25&	24&	29&	20&	23&	26&	28&	27&	24\\
	2DS & 8&	12&	13&	11&	12&	11&	19&	11&	15&	12\\
	3DS  & 25	&24&	25&	26&	25&	25&	17&	24&	21&	24\\
	3LD  & 34&	25&	20&	15&	12&	21&	16&	9&	13&	19\\
	3SC & 23 &25& 22&16 &14&	18&	11&	12&	12&	16		\\
	\bottomrule
\end{tabular}
}
\end{center}
\end{table}

Envy may emerge in envy-free procedures due to two reasons. One is that subjects strategically manipulate their cake cuts. Another is that subjects do not understand the procedure. Even in 2ACC, the simplest of the procedures, envy was generated in 3\% and 2\% of the cases in the last round  of EXP1 and EXP2, respectively, when subjects knew their opponents' preferences and were already familiar with the division procedure. In these cases, envy was generated by thoughtless cake cuts. The data for 2ACC suggest that this dull behavior occurs rarely. Most of the envy is instead caused by strategic experimentation of the subjects, and reduces once subjects know their opponents' preferences, in the last two rounds. It is somewhat surprising that envy is generated in a considerable fraction of cases in 2SCC, which we observe is due to the fact that subjects follow the simple heuristic of copying a manipulation strategy that was successful in the past (cut a bit further to the right of the non-strategic cut). 

In EXP1, envy is generated in half or more of the cake divisions in all procedures for 3 and 4 agents with the exception of 3SC. In EXP1, envy is generated in over 90\% of the cases when 4EP is used. This finding is intriguing because 4EP theoretically performs well with regards to envy in that it minimizes the maximum number of players who can be envied among all proportional procedures \citep{Brams2011DivideandConquer}. More envy is generated in EXP1 than in EXP2, since in EXP1 all subjects are cutters, who are more prone to envy, whereas in EXP2 agents may play any role in the procedures. 

Both EXP1 and EXP2 support the hypothesis that envy-free procedures generate less envy. To see this, we compare the difference between the percentage of cases in which envy was generated in three-agent envy-free procedures (3SC) versus three-agent proportional procedures (3DS, 3LD). The difference is of 28 and 5 percentage points for EXP1 and EXP2, respectively, and is statistically significant in EXP1 and EXP2 when clustering at the subject level. To see this, we perform a cluster-adjusted t-test, with corresponding p-value of 0.001 and 0.0428 for EXP1 and EXP2, respectively. We note that the difference in envy generated is not statistically significant  when clustering at the session level; in this case the p-value becomes 0.1183. The increase in the p-value is likely due to having only 8 session-level clusters, rather than the existence of session-specific idiosyncratic effects.

Overall, envy decreases after subjects learn their opponents' preferences. To see this, we conduct a logit regression of the probability of the emergence of envy on the round number. The associated coefficients are -0.04 and -0.06 for EXP1 and EXP2, respectively, and are statistically significant (p-value equal to 0.001 and 0.002, respectively; see {\bf Table \ref{tab:regression2}}). 
\begin{table}[!h]
	\begin{center}
		\caption[caption,justification=centering]{Probability of envy explained by round number.}
		\label{tab:regression2}
		\begin{tabular}{lcccc}
			\toprule
	& \multicolumn{4}{c}{Dependent Variable:}    \\
& \multicolumn{4}{c}{Prob. of   envy}    \\
			& \multicolumn{2}{c}{EXP1} & \multicolumn{2}{c}{EXP2}      \\
			\midrule
			Round number   & -0.04     	& -0.04   		& -0.06     & -0.06\\
			Standard error & 0.011 		& 0.011   		& 0.018     & 0.019   \\
			Subject FE     & No          & Yes        & No    & Yes\\
			Session FE     &  -           &   -         & No     & Yes \\
			Observations   & 7,441       & 7,441      & 4,788  & 4,788\\
			P-value        & 0.000       & 0.000      & 0.001      & 0.002   \\
			Clusters       & 133         & 133        & 114     & 8  \\
			\bottomrule
		\end{tabular}
	
{\footnotesize	Logistic regression. Standard errors clustered at the subject level in the first three columns and at the session level in the fourth column. Removing subject fixed effects (FE) changes the results minimally.\hfil{ }}
	\end{center}
\end{table}

We summarize these findings as follows. 

\begin{result}
Envy-free procedures generate allocations with envy, but less than non-envy-free procedures. Envy decreases with learning.
\end{result}

\subsection{Learning}
\label{sub:results-learning}

We proceed to examining the extent to which knowledge can help with learning. We observe consistent evidence, in both EXP1 and EXP2, that knowing the opponents' preferences leads to higher payoffs in 2ACC, 3LD and 3SC (we compare the number of points obtained in rounds 1--5 versus those obtained in rounds 6--7, see {\bf Table \ref{tab:knowledge}}). In some cases, knowledge harms the subjects, such as in 2SCC and 2DS in EXP2 or in 3DS in EXP1, although the payoff difference is not statistically significant.
\begin{table}[t]
\begin{center}
\caption[caption,justification=centering]{Average points obtained with and without knowledge of opponents' preferences.}
\label{tab:knowledge}
\resizebox{\textwidth}{!}{
\begin{tabular}{l|ccccccccc}
\toprule
	& 2ACC 		& 2SCC & 2DS & 3DS & 3LD & 3SC & 4DS & 4EP & 4LD  \\
\midrule
	\multicolumn{10}{l}{\it EXP1}\\
	No knowl. \footnotesize ($n=665$)& 77   & 84   &$\cdot$	& 48  	& 40  	& 44  & 23  & 27  	& 41  	    \\

	Knowl.	\footnotesize ($n=266$)	& 87   & 88   & $\cdot$	& 45  	& 46  	& 49  & 25  & 28  	& 43  	    \\
	
	Difference 	& 10   & 4   &$\cdot$	& -3  	& 6  	& 6  & 2  	& 1  	& 2  	   \\
	\footnotesize p-value* & \footnotesize 0.00  & \footnotesize 0.07  & $\cdot$   & \footnotesize 0.20  & \footnotesize 0.01 & \footnotesize 0.00  & \footnotesize  0.18 & \footnotesize 0.27  & \footnotesize   0.31    \\

\midrule 
	\multicolumn{10}{l}{\it EXP2}\\
No knowl. \footnotesize ($n=570$)& 92  & 71  & 83 & 75 & 70 & 52&$\cdot$&$\cdot$&$\cdot$ \\

Knowl. \footnotesize($n=228$)	& 97  & 68 	& 81 & 81 & 78  & 54&$\cdot$&$\cdot$&$\cdot$	    \\

Difference 	& 5   & -3   	& -2  	& 6 &8&2  	   &$\cdot$&$\cdot$&$\cdot$\\
\footnotesize p-value* & \footnotesize  0.09 & \footnotesize 0.10   & \footnotesize 0.53  & \footnotesize 0.07  & \footnotesize 0.02  & \footnotesize 0.17  &$\cdot$&$\cdot$&$\cdot$\\
\footnotesize p-value** & \footnotesize  0.09 & \footnotesize 0.08   & \footnotesize  0.42 & \footnotesize 0.06  & \footnotesize  0.01 & \footnotesize 0.09  &$\cdot$&$\cdot$&$\cdot$\\
\bottomrule
\end{tabular}
}
\begin{tablenotes}
	\item \footnotesize We report the p-value for a cluster-adjusted t-test testing the null hypothesis that the difference is zero. One asterisk indicates standard errors clustered at the subject level, two indicate clustering at the session level. Rounding errors sometimes cause the difference to not match the original values exactly (up to 1 digit).
\end{tablenotes}
\end{center}
\end{table}

In EXP1, we find that most of the benefits of knowledge come from learning via experimentation, whereas in EXP2 agents learn from experimentation and directly observing their opponents' preferences jointly, but none of these two effects alone is significant on its own. {\bf Table \ref{tab:7}} shows that the payoffs obtained in round 5 compared to those in round 1 are significantly higher in EXP1 for 2ACC, 2SCC, $n$LD and 3SC, whereas we only observe this effect in EXP2 for 3LD. In comparison, in EXP1 revealing the opponents' preferences directly (round 7) only affects the payoffs obtained with the knowledge of experimentation (round 5) in 2ACC and 3SC, and does not affect the payoff in any procedure in EXP2.\footnote{We have interpreted the increase of payoffs in later rounds as the effects of learning agents' preferences. But, as one reviewer points out, the increase in payoffs could be also partially caused by agents learning about the division procedure. We believe that this second channel, while possible, is relatively small: we explained the procedures in detail to the participants before the experiment started and, in EXP2, agents do a practice round to familiarize themselves with the division mechanisms. Nonetheless, we acknowledge that it would be interesting to precisely disentangle the consequences of learning about preferences versus mechanisms in future experiments.}
\begin{table}[h!]
	\begin{center}
	\caption[caption,justification=centering]{Average points obtained in rounds 1, 5 and 7.}
	\label{tab:7}
	\resizebox{\textwidth}{!}{
		\begin{tabular}{l|ccccccccc}
			\toprule
			& 2ACC 		& 2SCC & 2DS & 3DS & 3LD & 3SC & 4DS & 4EP & 4LD  \\
			\midrule
			\multicolumn{10}{l}{\it EXP1}\\
			 Round 1  \footnotesize ($n=133$)  & 68     & 81&  $\cdot$   & 50     & 34     & 40     & 23     & 34     & 25     \\
			 Round 5  \footnotesize ($n=133$)  & 82     & 91 &  $\cdot$  & 50     & 45     & 44     & 24     & 45     & 28     \\
			 Round 7  \footnotesize $(n=133)$  & 89   & 91  &$\cdot$ & 46   & 47   & 50   & 26   & 46   & 29   \\
			Diff rounds 5 - 1 & 14     & 10  & $\cdot$  & 0      & 11     & 5      & 1      & 8      & 3      \\
			\footnotesize	p-value*    & \footnotesize 0.00	& \footnotesize 0.00	 &$\cdot$ & \footnotesize 0.97	  & \footnotesize 0.00   & \footnotesize 0.05	& \footnotesize 0.66	 & \footnotesize 0.01	  & \footnotesize 0.07\\
		Diff rounds 7 - 5 & 7    & 0    & $\cdot$& -4   & 2    & 6    & 2    & 0    & 1    \\
		\footnotesize	p-value*    & \footnotesize 0.04 & \footnotesize 0.89 &$\cdot$ &\footnotesize 0.20 & \footnotesize 0.51 & \footnotesize 0.02 & \footnotesize 0.30 & \footnotesize 0.96 & \footnotesize 0.58 \\
			\midrule 
			\multicolumn{10}{l}{\it EXP2}\\
			Round 1 \footnotesize ($n=114$) & 90  & 73  & 85 & 72 & 60 & 51&$\cdot$&$\cdot$&$\cdot$ \\
		Round 5 \footnotesize ($n=114$)	& 94  & 71 	& 83 & 76 & 77  & 51	 &$\cdot$&$\cdot$&$\cdot$  \\
		Round 7 \footnotesize ($n=114$)& 99  & 68  & 84 & 79 & 81 & 53 &$\cdot$&$\cdot$&$\cdot$\\
		Diff rounds 5 - 1	& 4   & -2   	& -3  	& 5 & 17 &1  	 &$\cdot$&$\cdot$&$\cdot$  \\
		\footnotesize p-value* & \footnotesize 0.30   & \footnotesize 0.52   & \footnotesize 0.45  & \footnotesize 0.29  & \footnotesize 0.00  & \footnotesize 0.78  &$\cdot$&$\cdot$&$\cdot$\\
		\footnotesize p-value** & \footnotesize 0.36  & \footnotesize  0.53  & \footnotesize 0.46  & \footnotesize 0.32  & \footnotesize  0.00 & \footnotesize 0.78  &$\cdot$&$\cdot$&$\cdot$\\
					Diff rounds 7 - 5 	& 5   & -3   	& 1  	& 3 & 4 & 2 &$\cdot$&$\cdot$&$\cdot$  	   \\
				\footnotesize p-value* & \footnotesize 0.09   & \footnotesize 0.29   & \footnotesize 0.72  & \footnotesize 0.52  & \footnotesize 0.29  & \footnotesize 0.25  &$\cdot$&$\cdot$&$\cdot$\\
				\footnotesize p-value** & \footnotesize  0.27 & \footnotesize  0.30  & \footnotesize 0.72  & \footnotesize 0.54  & \footnotesize 0.32  & \footnotesize  0.27 &$\cdot$&$\cdot$&$\cdot$\\
			\bottomrule
		\end{tabular}
	}
		\begin{tablenotes}
			\item \footnotesize We report the p-value for a cluster-adjusted t-test testing the null hypothesis that the difference is zero. One asterisk indicates standard errors clustered at the subject level (no effect, since there is only one observation per subject in each round), two indicate clustering at the session level. Rounding errors sometimes cause the difference to not match the original values exactly (up to 1 digit).
		\end{tablenotes}
	\end{center}
\end{table}

\subsection{Quality of Learning}
\label{sub:results-quality}
In this section we are interested in observing how good people are at learning their opponent’s valuations.
To investigate this question, we focus on 2ACC, a procedure in which any opponent who is the chooser (either a non-strategic automaton or a strategic agent) always reports her real preferences by choosing her preferred part of the cake among the two available ones.\footnote{Note that this analysis does not assume that the agent's opponent in 2ACC is truthful. The only necessary assumption is that the player is maximizing their own utility when choosing the best piece of the cake out of the two available.}
We present a model of rational learning for the cutter in 2ACC. We consider 2ACC played for $T$ rounds, where in each round, Alice cuts the cake and Bob chooses a piece. Our model follows the experiment setup, particularly:
\begin{itemize}
\item The game is discretized: the cake is $[0,c]$ for some integer $c$ (in the experiment $c=600$ is the number of pixels in the cake); Alice may cut only in integer locations; a cut in $x$ means that the left piece is $[0,x)$ and the right piece is $[x,c]$.
\item Bob always picks the most valuable piece for him, and if the pieces have equal value, he breaks the tie by selecting the left piece.
\end{itemize}
Alice's payoff depends only on Bob's \emph{half point} --- the integer $h$ for which $v_B(0,h) = v_B(h,c) = v_B(0,c)/2$.
If Alice cuts at some $x< h$, then Bob takes the right piece and she gets $[0,x)$; if Alice cuts at $x\geq h$, then Bob takes the left piece and she gets $[x,c]$.

If Alice knows $h$, then it is optimal for her to cut either at $h-1$ or at $h$; in the former case she gets $[0,h-1)$ and in the latter case she gets $[h,c]$. Therefore Alice can guarantee to herself a utility of:
\begin{align*}
u^{opt}(h) =
\max[v_A(0,h-1), v_A(h,c)]
\end{align*}
Initially, Alice does not know $h$, but she can learn the possible range of $h$ from Bob's choices: if Alice cuts at some $s\in[0,c]$ and Bob chooses the right piece, she learns that $h> s$; similarly, if Alice cuts at $t\in[0,c]$ and Bob chooses the left piece, Alice learns that $h \leq t$. In each round,
Alice's knowledge about Bob is summarized by two numbers $s<t$ that represent the lower and upper bounds for Bob's half-point $h$, i.e., $s < h \leq t$.
With this knowledge, cutting at any $x<s$ is dominated by cutting at $s$ (since Alice will get $[0,x]$, which is worth at most what $[0,s]$ is worth to her), and cutting at any $x > t$ is dominated by cutting at $t$ (since Alice will get $[x+1,c]$, which is worth at most what $[t+1,c]$ is worth to her).
We say that Alice is \emph{rational} if all her cuts (from the second round onwards) are undominated.

Our findings regarding rational agents and the use of undominated strategies are summarized in {\bf Table \ref{tab:rationality2}}. Interestingly, less than 30\% of all players are fully rational (i.e. all their actions are undominated). 
Moreover, even a relaxed definition of rationality (that we call ``semi-rationality''), that allows for one mistake, is satisfied by only one-third to one-half of our subjects. 
While our finding, that there is only a minority of rational or semi-rational behavior, is in line with previous studies showing that human subjects often play dominated strategies
\citep{artemov2017,hassidim2016,hassidim2017mechanism,rees2017,parco2004enhancing}, we find it somewhat surprising that so many people behave in a way that is so clearly irrational.

\begin{table}[t]
\begin{center}
\caption{
	\label{tab:rationality2}
	Percentage of (semi)-rational behavior in 2ACC.
}
\begin{tabular}{l|cc}
\toprule
& \shortstack{Rational players\\(no dominated actions)}
& \shortstack{Semi-rational players\\($\leq 1$ dominated actions)}
\\
\midrule
2ACC EXP1 
& 32/132 = 24.2\% 
& 50/132 = 37.9\%
\\
2ACC EXP2 
& 17/57 = 29.8\% 
& 30/57 = 52.6\% 
\\
\bottomrule
\end{tabular}
\end{center}
\end{table}

\section{Conclusion}
\label{sec:conclusion}
We conduct two lab experiments involving several well-known fair cake cutting procedures in an attempt to quantify the extent to which stronger theoretical (non-strategic) fairness properties correspond to ``better'' performance in practice. In particular, we consider six proportional procedures, three of which are also envy-free.  Since the envy-freeness property  can only be guaranteed when agents do not manipulate the cake-cutting procedures, it is not clear if envy-free procedures will lead to reduced envy in the lab, where subjects very often report their preferences strategically. This work contrasts the level of envy generated in the outcomes of the Asymmetric cut-and-choose, Symmetric cut-and-choose, and Selfridge-Conway procedures, which are all envy free, to the envy generated in the outcomes of the Dubins-Spanier, Last diminisher and Even-Paz procedures, which are simply proportional.


Our experiment provides the first empirical evidence supporting the real-life application of the celebrated Selfridge-Conway cake-cutting procedure, among other interesting observations. Our experiment strongly suggests that the Selfridge-Conway procedure generates less envy than other proportional procedures. We hope that our findings guide its practical implementation, in the light of the very successful implementations of other fair division protocols in online platforms such as Spliddit.com.

Three interesting directions for future experiments are: 
(a) Check other cake-cutting procedures, in particular, procedures that guarantee additional properties such as equitability, strategy-proofness or Pareto-efficiency. Is the added complexity of these procedures justified? 
(b) Compare the performance of structured cake-cutting procedures to unstructured face-to-face bargaining.
(c) Check division of more realistic resources. For example, instead of showing the subjects artificial one-dimensional ``cakes'', one can show them real two-dimensional maps of land-estates. Fair division of land is an important issue in many inheritance and dissolution cases. How can cake-cutting procedures be used to solve such issues in practice?

\clearpage
\setlength{\bibsep}{0cm}
\bibliographystyle{ecta}
\bibliography{biblio}

\clearpage
\begin{appendices}

\section*{Appendix 1: Preference Profiles}

All the preference profiles are generated using piecewise uniform valuations. The cake is divided in 600 pixels of equal length with each desired pixel giving the agent 1 point. Agents desire 120 pixels which give the corresponding 120 points described in the main text.
We present the preferences using the tables below; a one in the table indicates that the agent desires the interval in question. The intervals that are not mentioned are not desired by any agent.

\begin{table}[!ht]
\begin{center}
\caption[caption,justification=centering]{Preferences used in 2ACC.}
\label{tab:2acc}				
\begin{tabular}{|l|l|l|l|l|l|l|}
\toprule
2ACC    & 61-120 & 121-130 & 171-190 & 291-310 & 411-430 & 451-540 \\
\hline
Subject 1 & 1      & 0       & 1       & 1       & 1       & 0       \\
Subject 2 & 0      & 1       & 0       & 0       & 1       & 1      \\
\bottomrule
\end{tabular}
\end{center}
\end{table}

\begin{table}[!ht]
\begin{center}
\caption[caption,justification=centering]{Preferences used in 2SCC.}
\label{tab:2scc}				
\scalebox{.7}{\begin{tabular}{|l|l|l|l|l|l|l|l|l|l|l|}
\toprule
2SCC    & 141-170 & 191-220 & 231-240 & 241-260 & 271-300 & 311-320 & 321-330 & 361-390 & 471-490 & 511-540 \\
\hline
Subject 1 & 0       & 0       & 1       & 1       & 1       & 1       & 0       & 0       & 1       & 1       \\
Subject 2 & 1       & 1       & 1       & 0       & 0       & 1       & 1       & 1       & 0       & 0      \\
\bottomrule
\end{tabular}}	\end{center}
\end{table}

\begin{table}[!ht]
	\begin{center}
		\caption[caption,justification=centering]{Preferences used in 2DS.}
		\label{tab:2ds}				
		\scalebox{.7}{
			\begin{tabular}{|l|l|l|l|l|l|l|l|l|l|l|}
				\toprule
				2DS&	101-130 & 151-180 & 211-240 & 241-260 & 271-280 & 291-320 & 321-340 & 341-350 & 351-370 \\
				\hline
			Subject 1 &	0       & 0       & 1       & 1       & 1       & 1       & 0       & 1       & 1       \\
				Subject 2&1       & 1       & 1       & 0       & 0       & 0       & 1       & 1       & 0      \\
				\bottomrule
		\end{tabular}
	}
	\end{center}
\end{table}

\begin{table}[!ht]
\begin{center}
\caption[caption,justification=centering]{Preferences used in 3DS.}
\label{tab:3ds}				
\scalebox{.7}{\begin{tabular}{|l|l|l|l|l|l|l|l|l|l|l|}
\toprule
3DS     & 71-110 & 121-130 & 131-150 & 151-160 & 171-180 & 191-200 & 271-310 & 311-380 & 411-430 & 451-540 \\
\hline
Subject 1& 1      & 1       & 1       & 1       & 0       & 0       & 1       & 0       & 0       & 0       \\
Subject 2& 0      & 1       & 0       & 0       & 0       & 0       & 0       & 0       & 1       & 1       \\
Subject 3& 0      & 1       & 1       & 0       & 1       & 1       & 0       & 1       & 0       & 0      \\
\bottomrule
\end{tabular}}
\end{center}
\end{table}

\begin{table}[!ht]
	\begin{center}
		\caption[caption,justification=centering]{Preferences used in 3LD.}
		\label{tab:3ld}				
		\scalebox{0.53}{	
			\begin{tabular}{|l|l|l|l|l|l|l|l|l|l|l|l|l|l|}
				\toprule
				3LD     & 71-90 & 91-110 & 121-190 & 221-230 & 231-260 & 281-300 & 301-320 & 341-350 & 351-370 & 371-400 & 401-410 & 431-440 & 451-460 \\
				\hline
				Subject 1& 0     & 1      & 0       & 1       & 1       & 1       & 0       & 1       & 1       & 0       & 1       & 0       & 0       \\
Subject 2& 1     & 1      & 1       & 1       & 0       & 0       & 0       & 0       & 0       & 0       & 0       & 0       & 0       \\
		Subject 3 & 0     & 0      & 0       & 0       & 0       & 1       & 1       & 0       & 1       & 1       & 1       & 1       & 1     \\
			\bottomrule
		\end{tabular}}
	\end{center}
\end{table}

\begin{table}[!ht]
	\begin{center}
		\caption[caption,justification=centering]{Preferences used in 3SC.}
		\label{tab:3sc}				
		\scalebox{0.42	}{	
			\begin{tabular}{|l|l|l|l|l|l|l|l|l|l|l|l|l|l|l|l|l|l|}
			\toprule
				3SC     & 71-80 & 81-90 & 91-100 & 101-110 & 141-150 & 151-170 & 171-190 & 211-230 & 271-280 & 281-290 & 291-300 & 301-320 & 321-330 & 331-340 & 381-400 & 451-470 & 471-490 \\
				\hline
				Subject 1& 0     & 0     & 0      & 0       & 0       & 1       & 1       & 1       & 0       & 0       & 0       & 0       & 0       & 0       & 1       & 1       & 1       \\
				Subject 2& 0     & 1     & 0      & 1       & 1       & 1       & 0       & 0       & 1       & 1       & 0       & 1       & 0       & 1       & 0       & 0       & 1       \\
				Subject 3 & 1     & 1     & 1      & 1       & 0       & 0       & 0       & 0       & 0       & 1       & 1       & 1       & 1       & 1       & 0       & 0       & 1      \\
			\bottomrule		\end{tabular}	
		}
	\end{center}
\end{table}

\begin{table}[!ht]
\begin{center}
\caption[caption,justification=centering]{Preferences used in 4DS.}
\label{tab:4ds}				
\scalebox{0.415}{	\begin{tabular}{|l|l|l|l|l|l|l|l|l|l|l|l|l|l|l|l|l|l|l|}
\toprule
4DS     & 61-80 & 81-90 & 91-120 & 141-150 & 151-170 & 171-180 & 181-210 & 211-240 & 241-270 & 271-300 & 301-330 & 331-360 & 371-390 & 391-420 & 421-450 & 451-480 & 491-510 & 511-540 \\
\hline
Subject 1& 1     & 0     & 0      & 1       & 1       & 0       & 0       & 0       & 1       & 0       & 0       & 0       & 1       & 0       & 0       & 0       & 1       & 0       \\
Subject 2& 1     & 1     & 0      & 0       & 0       & 0       & 1       & 0       & 0       & 0       & 1       & 0       & 0       & 0       & 1       & 0       & 0       & 0       \\
Subject 3& 0     & 0     & 1      & 0       & 0       & 0       & 0       & 1       & 0       & 0       & 0       & 1       & 0       & 0       & 0       & 1       & 0       & 0       \\
Subject 4& 0     & 0     & 0      & 0       & 1       & 1       & 0       & 0       & 0       & 1       & 0       & 0       & 0       & 1       & 0       & 0       & 0       & 1      \\
\bottomrule
\end{tabular}}
\end{center}
\end{table}

\begin{table}[!ht]
\begin{center}
\caption[caption,justification=centering]{Preferences used in 4LD.}
\label{tab:4ld}				
\scalebox{0.5}{	
\begin{tabular}{|l|l|l|l|l|l|l|l|l|l|l|l|l|l|l|}
\toprule
4LD     & 61-90 & 91-110 & 111-160 & 181-230 & 231-250 & 251-270 & 271-280 & 281-290 & 311-340 & 341-350 & 351-370 & 371-380 & 381-410 & 421-520 \\
\hline
Subject 1& 1     & 0      & 0       & 1       & 1       & 0       & 1       & 1       & 0       & 0       & 0       & 0       & 0       & 0       \\
Subject 2& 0     & 0      & 1       & 0       & 0       & 1       & 1       & 0       & 0       & 1       & 1       & 1       & 0       & 0       \\
Subject 3& 0     & 0      & 0       & 0       & 1       & 1       & 0       & 0       & 1       & 1       & 0       & 1       & 1       & 0       \\
Subject 4& 0     & 1      & 0       & 0       & 0       & 0       & 0       & 0       & 0       & 0       & 0       & 0       & 0       & 1      \\
\bottomrule
\end{tabular}
}
\end{center}
\end{table}

\begin{table}[!ht]
\begin{center}
\caption[caption,justification=centering]{Preferences used in 4EP.}
\label{tab:4ep}				
\scalebox{0.42}{	
\begin{tabular}{|l|l|l|l|l|l|l|l|l|l|l|l|l|l|l|l|l|l|l|}
\toprule
4EP     & 91-110 & 111-120 & 121-140 & 161-170 & 171-190 & 191-210 & 211-220 & 221-240 & 241-270 & 281-300 & 301-320 & 331-340 & 341-350 & 351-360 & 361-370 & 411-430 & 471-510 \\
\hline
Subject & 1      & 1       & 0       & 0       & 1       & 1       & 1       & 1       & 0       & 0       & 1       & 0       & 0       & 0       & 0       & 0       & 0       \\
Robot 1 & 0      & 1       & 1       & 0       & 1       & 1       & 0       & 0       & 0       & 1       & 1       & 0       & 0       & 1       & 0       & 0       & 0       \\
Robot 2 & 0      & 0       & 0       & 0       & 0       & 1       & 1       & 0       & 0       & 1       & 1       & 0       & 1       & 1       & 1       & 1       & 0       \\
Robot 3 & 0      & 0       & 0       & 1       & 1       & 0       & 0       & 0       & 1       & 0       & 0       & 1       & 1       & 0       & 0       & 0       & 1      \\
\bottomrule
\end{tabular}}
\end{center}
\end{table}

\clearpage
\section*{Appendix 2: Experiment Instructions}

Upon their arrival to the lab, the cake-cutting procedures are explained to the subjects using the slides available at \href{www.josueortega.com}{www.josueortega.com}. We do not include them here for the sake of brevity. The presentation comprises 31 slides so to make the procedures as clear as possible. The instructions for EXP1 and EXP2 are almost identical. The only differences are that, in EXP2, we changed the name of ``{\it Super Fair} to ``{\it Double Knife}'', and that subjects are told that their opponents will be real persons who are also in the room. EXP2 includes {\it ``Leftmost Leaves''} for two players, and excludes all 4-agent procedures. In EXP1, the procedures are shown in a fixed order, whereas in EXP2 the procedures are shown in a random order.

We describe the text that the subjects observe in the graphical interface. These are as follows:\\

Welcome to the game. When you are ready to start click the start button.

\paragraph{I Cut You choose, against 1 opponent} Description: You will cut the cake into two parts. Your opponent will choose the one he prefers. You will receive the other one. Suggestion: If you cut the cake in two pieces worth 60 points, you guarantee that you will receive 60 points. Dividing the cake differently may give you more points, but may also give you less.

\paragraph{Cut Middle, against 1 opponent} Description: You will cut the cake into two parts. Your opponent also cuts the cake into two. We cut the cake in the middle of those cuts and you get the part that includes your cut. Suggestion: If you cut the cake in two pieces worth 60 points, you guarantee that you will receive at least 60 points. Dividing the cake differently may give you more points, but may also give you less.

\paragraph{Leftmost Leaves, against 1 opponent} Description: All players make one cut to the cake. The one who cuts the leftmost piece gets the left part, and the other player gets the right part. Suggestion: If you cut the cake at a point which makes the left piece to have a value of 60, you guarantee that you will receive at least 60 points in this round. Dividing the cake differently may give you more points, but may also give you less.

\paragraph{Leftmost Leaves, against 2 opponents} Description: All players make one cut to the cake. The one who cuts the leftmost piece gets that part and leaves. The procedure is repeated until no agent is left. You may need to cut the cake twice in the same round if you don't choose the leftmost piece right away. Suggestion: If you cut the cake at 40 in each stage, you guarantee at least 40 points. Dividing the cake differently may give you more points, but may also give you less.

\paragraph{Leftmost Leaves, against 3 opponents} Description: All players make one cut to the cake. The one who cuts the leftmost piece gets that part and leaves. The procedure is repeated until no agent is left. You may need to cut the cake twice in the same round if you don't choose the leftmost piece right away. Suggestion: If you cut the cake at 30 in each stage, you guarantee at least 30 points. Dividing the cake differently may give you more points, but may also give you less.

\paragraph{Last Challenger, against 2 opponents} Description: You make a cut to the cake. This cut can be challenged by other players. If it is not challenged, you get the left piece of the cake and leave. If it is challenged, the player who challenges gets the left piece and leaves, and we restart the procedure with the leftover cake. You may need to cut the cake twice in the same round if your initial cut is challenged. Suggestion: If you cut the cake at 40 in each stage, you guarantee at least 40 points. Dividing the cake differently may give you more points, but may also give you less.

\paragraph{Last Challenger, against 3 opponents} Description: You make a cut to the cake. This cut can be challenged by other players. If it is not challenged, you get the left piece of the cake and leave. If it is challenged, the player who challenges gets the left piece and leaves, and we restart the procedure with the leftover cake. You may need to cut the cake twice in the same round if your initial cut is challenged. Suggestion: If you cut the cake at 30 in each stage, you guarantee at least 30 points. Dividing the cake differently may give you more points, but may also give you less.

\paragraph{Super Fast, against 3 opponents} Description: All players split the cake into two. The two who choose the leftmost cuts divide the first half, the other two the second half. Each half is divided using leftmost leaves. You will have to cut the cake twice. Suggestion: If you first cut the cake at 60 points and then at 30, you guarantee at least 30 points. Dividing the cake differently may give you more points, but may also give you less.

\paragraph{Super Fair$\slash$Double Knife, against 2 opponents} Description: In this procedure you have two knives. You should cut the cake into three pieces. Then a complex procedure occurs, which you can read in your information sheet. Suggestion: If you cut the cake into three pieces worth 40 points each, you guarantee 40 points. Dividing the cake differently may give you more points, but may also give you less.

\paragraph{Additional Explanation for 3SC} You will cut the cake into three pieces using two knives. We suggest you to cut the cake into three pieces worth 40 points each so to guarantee yourself 40 points. Dividing the cake differently may give you more points, but may also give you less.

After you cut the cake, opponent 1 will trim her most valued piece so to make her two most preferred pieces of equal value. The part she cuts from her most valued piece of cake will be put apart and divided later (the trimmings). Then opponent 2 will take the part he prefers. If opponent 2 does not take the part that opponent 1 trims, then opponent 1 will receive that part and you will receive the leftover. Otherwise, in case opponent 2 picks the trimmed part, opponent 1 chooses one of the two remaining pieces and then you choose last.

Once the main pieces of the cake have been divided, we will divide the trimmings. One of the two opponents (the one who did not choose the trimmed part) will cut the trimmings into three pieces. Then the other opponent will choose one of them. From the two leftovers, you will be given the one which is best for you, and the last one will be given to the remaining opponent.\\

Subjects also receive an official information sheet with the following information:

Strategic Behavior in Fair Division Problems

\paragraph{Invitation to our study} We would like to invite you to participate in this research project. You should only participate if you want to; choosing not to take part will not disadvantage you in any way. Before you decide whether you want to take part, it is important for you to read the following information carefully and discuss it with others if you wish. Ask us if there is anything that is not clear or you would like more information.

\paragraph{Background on the project} We are conducting an exploration of how people make economic decisions, in particular on how they decide to divide and share resources with others. We are testing how different resource allocation methods affect the economic decisions people make.

\paragraph{Experiment} You will be asked to divide resources with 2, 3, or 4 other agents. The way in which you decide to divide the resources will affect how much money you will receive by the end of the experiment. The experiment will last for around one hour. You won’t be required to participate again in the experiment. You will be paid in private at the end of the experiment. You will receive at least $\pounds$5 for showing up, but you may earn more money based on your decisions throughout this session.

\paragraph{Are there any risks associated with this experiment?} There are no risks associated with this experiment. Shall you experience any discomfort please contact any member of the staff.

\paragraph{Informed consent} Should you agree to take part in this experiment, you will be asked to sign a consent form before the experiment commences.

\paragraph{Withdrawal} Your participation is voluntary and you will be free to withdraw from the project at any time without giving any reason and without penalty. If you wish to withdraw, you simply need to notify the principal investigator (see contact details below). If any data have already been collected, upon withdrawal, your data will be destroyed, unless you inform the principal investigator that you are happy for us to use such data for the scientific purposes of the project.

\paragraph{Data gathered} We will record the economic decisions you make during the experiment, namely how you decide to share resources with other participants. Signed consent forms will be kept separately from individual experimental data and locked in a drawer until the end of the project.

\paragraph{Findings} After the end of the project, we will publish the findings of our research. We will be happy to provide you with a lay summary of the main findings and with copies of the articles published if you express an interest.

\paragraph{Concerns and complaints} If you have any concerns about any aspect of the study or you have a complaint, in the first instance please contact the principal investigator of the project (see contact details below). If are still concerned or you think your complaint has not been addressed to your satisfaction, please contact the Director of Research in the principal investigator’s department (see below). If you are still not satisfied, please contact the University’s Research Governance and Planning Manager (Sarah Manning-Press).

\paragraph{Funding} The research is funded by the EssexLab of the University of Essex.

\paragraph{Ethical approval} This project has been reviewed on behalf of the University of Essex Ethics Committee and had been given approval.

\paragraph{Principal investigator} Dr. Josue Ortega, Lecturer, Department of Economics, University of Essex, Wivenhoe Park, CO4 3SQ, Colchester,

\noindent josue.ortega@essex.ac.uk.

\paragraph{Co-investigators} Dr. Maria Kyropoulou, Lecturer, Department of Computer Science and Electronic Engineering, University of Essex
Wivenhoe Park, CO4 3SQ, Colchester, maria.kyropoulou@essex.ac.uk.

\noindent Dr. Erel Segal-Halevi, Lecturer, Department of Computer Science, Ariel University, Ramat HaGolan St 65, Ari'el, erelsgl@gmail.com

\paragraph{Director of Research, Economics Department} Prof. Friederike Mengel, Professor, Department of Economics, University of Essex
Wivenhoe Park, CO4 3SQ, Colchester, fmengel@essex.ac.uk.

\paragraph{Research Governance and Planning Manager} Sarah Manning-Press, University of Essex, Wivenhoe Park, CO4 3SQ, Colchester, sarahm@essex.ac.uk.\\

Finally, we include the questions in the fairness survey that subjects complete after they finish cutting all the cakes. The observations corresponding to this survey were not further analyzed because of potential experimenter-demand effects.

\paragraph{Experiment feedback} Please answer (with as many details as possible) the following questions.\\

\noindent How fair was ``Cut and choose"? {\it Very unfair, Unfair, Fair, Very fair.}

\noindent Feedback: {\it textbox}.\\

\noindent How fair was ``middle cut"? {\it Very unfair, Unfair, Fair, Very fair.}

\noindent Feedback: {\it textbox}.\\

\noindent How fair was ``last challenger"? {\it Very unfair, Unfair, Fair, Very fair.}

\noindent Feedback: {\it textbox}.\\

\noindent How fair was ``lefmost leaves"? {\it Very unfair, Unfair, Fair, Very fair.}

\noindent Feedback: {\it textbox}.\\

\noindent In your opinion, was ``super fair" a fairer procedure than all the others? {\it Yes, No.}

\noindent Feedback: {\it textbox}.\\

\noindent In your opinion, was ``super fast" an easier procedure to use than all the others? {\it Yes, No.}

\noindent Feedback: {\it textbox}.\\

\noindent Would you have preferred to bargain over the cake directly with the other players instead of dividing it with these methods? {\it Yes, No, Doesn't matter.}\\

\noindent Please give us your comments on which procedures produced fairer allocations and were easier to use.

\noindent Feedback: {\it textbox}.\\

\pagebreak

\section*{Appendix 3: Omitted Proofs}

We present the proofs omitted in the main text. 

\begin{proposition*}[\ref{lem:max-strategising-payoff}]
The procedures 2ACC, 2SCC, $n$DS, $n$LD, $n$EP, 3SC are $\left(1-\frac{1}{n}\right)$-strategy-proof and this is tight.
\end{proposition*}

\begin{proof}[Proof of Proposition \ref{lem:max-strategising-payoff}]
The fact that all these procedures are porportional, implies that if agents adhere to the procedure, then each of them is guaranteed utility $1/n$. Since they can get utility at most $1$ in any allocation, the increase in their utility by strategic behavior is at most $1-1/n$. To prove that this is tight, we provide instances such that an agent would get utility exactly $1/n$ by non-strategically reporting her valuation function, while she could get utility $1$ by strategizing.

We start with the case of 3SC. Consider a cake $[0,1]$ and the following valuations of the agents: $v_1(0,1/3)=v_2(1/3,2/3)=v_3(2/3,1)=1$; agents have valuation $0$ for any other part. Assume everyone behaves non-strategically, and in the first step agent 1 divides the cake in the following parts of equal value to her: $[0,1/9), [1/9,2/9)$, and $[2/9,1]$. Agent 2 has positive valuation only for the last part, so in the next step, she will trim it so that the trimmed part has value $0$ to her; let the trimmed part lie inside $[2/9,1/3]$. 
Agent 3 is indifferent between the pieces, so let her choose the leftmost one. Agent 2 will then get the trimmed piece, and the trimmings will be split by agent 3 such that both agents 2 and 1 only have positive value for the leftmost part of the trimmings. Since agent 2 selects first, agent 1's overall utility will be exactly $1/3$.

Now imagine that agent 1 behaves strategically in the first step and divides the cake into the parts $[0,1/3), [1/3,2/3)$, and $[2/3,1]$. Agent 2 will trim the second part so that the trimmed part is negligible, i.e. it is worth $0$ to everyone. Agent 3 will rationally get her desired part, i.e. $[2/3,1]$, agent 2 will get the trimmed part, and agent 1 will get her desired part $[0,1/3)$, thus obtaining utility $1$. 

Regarding 2SCC, consider a cake $[0,1]$ and the following valuations of the agents, for some positive $\epsilon<1/16$: $v_1(0,1/4)=v_1(3/8+\epsilon,1/2-\epsilon)=\frac{1}{2}$, and $v_2(1/2,1/2+\epsilon)=v_2(1/2+\epsilon,1)=\frac{1}{2}$; agents have valuation $0$ for any other part. Assume everyone behaves non-strategically, and agents 1 and 2 cut the cake at points $1/4$ and $1/2+\epsilon$, respectively\footnote{For the given instance, agent 1 would not violate the protocol by cutting within $(1/4,3/8-\epsilon)$, instead of at $1/4$, however, we can eliminate this ambiguity by slightly modifying the instance. In particular, let agent 1 have negligibly small positive utility for part $[1/4,1/4+\epsilon)$, and correspondingly decrease her utility for her rightmost desired item. Then, agent 1 will necessarily cut at $1/4$ if she is not strategic.}, to divide it to two parts of equal value to them. After the end of the procedure agent 1 will receive utility $1/2$. Now imagine that agent 1 behaves strategically and cuts the cake at $1/2-\epsilon$. In the resulting allocation each of the agents will receive utility $1$.

The other cases are simpler and use instances with valuation functions of the form $v_i(\frac{i-1}{n},\frac{i}{n})=1$ for $i=1,\ldots, n$, and $0$ otherwise, similar to 3SC. The analysis is straightforward (similar, yet much simpler than the one for 3SC), hence we omit it.\end{proof}

\begin{proposition*}[\ref{lem:envy-when-strategizing}]
Envy can be generated in 3SC with just one agent misrepresenting her preferences. This agent achieves a higher payoff at the cost of being envious.
\end{proposition*}

\begin{proof}[Proof of Proposition \ref{lem:envy-when-strategizing}]
We present an instance and a corresponding strategy for agent 1 who is assumed to be strategic and tries to maximize her utility when competing with two non-strategic agents. We show that agent 1 will end up envious of another agent, although she will achieve higher utility than what she would get by behaving non-strategically. We focus on the action of agent 1 at the beginning of the process, when she is asked to split the cake into three pieces. We consider this to be the strategy of agent 1; w.l.o.g. we ignore subsequent actions in the analysis as the only other choice that agent 1 makes is to select a part of the trimmings close to the end of the process, and it is clear that her incentives at that point are aligned with behaving non-strategically and getting the part that is most valuable to her.

Consider a cake $[0,1]$, which comprises 6 parts. The preferences of the agents are described by the valuations in Table \ref{tab:envyatNE}; agents are assumed to have uniform valuations within each of these parts.

\begin{table}[!ht]
\begin{center}
	\caption[caption,justification=centering]{Agents' preferences over cake pieces such that agent 1's optimal strategy in 3SC makes her envious.}
	\label{tab:envyatNE}				
	\begin{tabular}{lcccccc}
	\toprule
	Agents $\slash$ Cake parts &$P_1$   &$P_2$ &$P_3$  & $P_4$& $P_5$& $P_6$\\
	\midrule
	&&&&&&\\
	Agent 1 & 0     & $1/3$     & $0$       & $1/3$       & $0$      & $1/3$        \\
	&&&&&&\\
	Agents 2 and 3 & $1/6 $   & $1/6  $     & $1/3$     & 0       & $1/3$      & 0     \\
	&&&&&&\\
	\bottomrule
\end{tabular}
\end{center}
\end{table}

For consistency, we will make the following assumption regarding the behavior of the non-strategic agents 2 and 3. We assume that among actions that result in the same utility, the non-strategic agents will choose the one that immediately harms agent 1 the most. If there still is a tie, the agents will chose the leftmost valid option. We also assume that agent 1 cannot cut within the parts for which she has utility $0$, i.e. $P_1, P_3,$ and $P_5$ (the instance could be defined so that these parts have a negligibly small width and the space of allowed cuts is discrete). For the smooth execution of the protocol we allow such cuts if and only if it is absolutely necessary in order to achieve an exact trimmed piece or an even distribution of trimmings.

Non-strategic behavior for agent 1 would imply that she divides the cake at three equally valued pieces, i.e. $P_1\cup P_2$,   $P_3\cup P_4$, and $P_5\cup P_6$. This split would result in utility $1/3$ for agent 1 as there would essentially be no trimming and each of the other agents would obtain one of these pieces. 

We now show that the split in pieces $P_1\cup P_2\cup P_3\cup P_4$, $P_5$, and $P_6$, is a better response for agent 1 than behaving non-strategically, yet makes her envious of another agent. Indeed, under this split, agent 2 will trim the first piece so that the trimmings $T=P_1\cup P_2$. Agents 2 and 3 will first choose piece $P_3\cup P_4$ (the trimmed piece), and $P_5$, respectively, leaving part $P_6$ for agent 1. 
The trimmings  will be split in three equal parts, for which agent 1 will have value 0, 1/9, and 2/9, respectively. Whoever got the trimmed piece, $P_3\cup P_4$, will first select the rightmost part of the trimmings by our assumption, leaving the middle one for agent 1. Overall, agent 1's allocation has value $\frac{1}{3}+\frac{1}{9}=\frac{4}{9}>\frac{1}{3},$ yet she is envious of the agent who got the trimmed part, as her allocated piece has total value $\frac{1}{3}+\frac{2}{9}=\frac{5}{9}$ for agent 1.

It remains to show that the split in pieces $P_1\cup P_2\cup P_3\cup P_4$, $P_5$, and $P_6$ is a best response strategy for agent 1 assuming that the other two agents behave non-strategically.
Indeed, we show that $4/9$ is the maximum utility she can get, by examining all other possible cuts she could make at the beginning of the procedure. Let $c_1$ and $c_2$ denote the cuts of agent 1 and let  $c_i\in P$, for $i=1,2$, denote the fact that the $i$-th cut is inside part $P$ (including its boundary). 
\begin{itemize}
\item $c_1,c_2\in P_6$. We get that $T=(0,c_1)$ and agent 1 will get the least valuable piece among $[c_1,c_2)$ and $[c_2,1]$ before the splitting of the trimmings. $T$ will be split so that the leftmost part is $P_1\cup P_2$ and the other two parts have equal value for agent 1; agent 1 will receive one of the two rightmost parts of $T$. In total, agent 1 will get at most $1/3$ (half of what $P_3\cup P_4\cup P_5\cup P_6$ is worth).

\item $c_1\in P_4,c_2\in P_6$. Assume first that $c_1$ is not the right boundary of $P_4$. It holds that $T=[r,c_1)$, where $r$ is the left boundary of $P_3$. Agent 1 will get piece $[c_2,1)$ before the splitting of the trimmings. Agent 1 will ony have positive value for one part of the trimmings and by our assumption she won't be allowed to take it. She cannot get utility more than $1/3$ overall. If $c_1$ is the right boundary of $P_4$, then the dominating strategy of agent 1 is the one we claim to be her best responce, i.e. where $c_2$ is the left boundary of $P_6$, which leads to utility $4/9$ for agent 1.  

\item $c_1,c_2\in P_4$. $T=[r,c_1)$, where $r$ is the left boundary of $P_3$. Agent 1 will get piece $[c_1,c_2)$ before the splitting of the trimmings, and she cannot get utility more than $1/3$ overall.

\item $c_1\in P_2,c_2\in P_6$. Either $T=[c_1,r)$ for $r \in P_5$, or $T'=[r',c_2)$ for $r' \in P_3$, depending on which has less utility for agent 1 by our assumption. Agent 1 will get part $[c_2,1]$ before the splitting of the trimmings. In either case, $T$ will be split in three parts, one of which will contain part $P_4$, and agent 1 won't be allowed to take that part. Overall, agent 1 will not get utility more than $1/3$ in either case.

\item $c_1\in P_2,c_2\in P_4$. $T=[c_1,r)$, where $r$ is the right boundary of $P_2$.
Agent 1 will get part $[0,c_1]$ before the splitting of the trimmings, and her overall utility will never be more than $1/3$.

\item $c_1,c_2\in P_2$. $T=[c_2,r)$, where $r\in 5$ and agent 1 will get part $[c_1,c_2]$ before the splitting of the trimmings. $T$ will be split in three parts, one of which will contain part $P_4$. By our assumption, agent 1 will not get that part of the trimmings, so overall she will have value at most $1/3$.

\end{itemize}
The proof is now complete.
\end{proof}
\end{appendices}
 
\end{document}